\documentclass[acmsmall,screen,nonacm]{acmart}

\AtBeginDocument{%
  }
\usepackage{subcaption}
\usepackage{qcircuit}
\usepackage{wrapfig}
\usepackage{xcolor,listings}
\usepackage{algorithm,algpseudocode}
\usepackage{multirow}
\usepackage{thmtools}
\usepackage{tikz}
\usetikzlibrary{positioning, arrows.meta, fit, backgrounds, calc}

\usepackage{graphicx,color,xcolor,url,setspace}

\def\({\begin{eqnarray*}}
\def\){\end{eqnarray*}}

\def\phi{\varphi}

\def\ket#1{|#1\rangle}
\def\bra#1{\langle#1|}

\def\outer#1#2{\ket{#1}\bra{#2}}

\def\C#1{\text{\sf C}{(#1)}}

\def\hs{\mathcal{D}}
\def\oper{\mathcal{O}}
\def\frob{\mathcal{F}}
\def\trace{\mathcal{T}}

\title{Distance-Independent Universality of Clifford+T}
\author{Jens Palsberg}
\email{palsberg@ucla.edu}
\author{Keli Huang}
\email{kelihuang@cs.ucla.edu}
\author{Abdullah Almanei}
\email{abdullahalmanei@g.ucla.edu}

\affiliation{%
  \institution{UCLA}
  \city{University of California, Los Angeles}
  \country{USA}
}
\date{July, 2026}

\begin{abstract}
A well-known theorem states that the Clifford+T gate set is universal for quantum computing.
The theorem combines approximation using a distance measure on unitary matrices with equality up to global phase.
Previous proofs combine these two notions for specific distance measures, but do not identify the properties that govern how they work together.
Which properties are sufficient to state and prove the theorem?

We answer this question by defining projective distance measures using four axioms. We prove that the universality theorem holds for every distance measure satisfying these axioms. Thus, our formulation of the theorem is independent of any particular choice of distance measure. We show that the Hilbert--Schmidt distance is already a projective distance measure. We also develop a general construction that transforms a large class of distance measures into projective distance measures and apply it to obtain projective versions of the operator-norm distance, the Frobenius distance, and the trace distance. Finally, we formalize the proof of the universality theorem in Lean.
\end{abstract}

\begin{document}
\maketitle


\section{Introduction}
\label{sec:introduction}

A well-known theorem states that the Clifford+T gate set is universal for quantum computing.  In more detail, the theorem says that for any unitary matrix, we can use the Clifford+T gate set to build a quantum circuit that is as close as we like to the unitary matrix.  This requires us to use a distance measure, but the literature offers no canonical choice, as we discuss next.

The well-known book by \citeN{NielsenChuang00} gave a proof of the universality of Clifford+T using the operator-norm distance. 
The idea is to decompose any unitary in two steps.
The first step is to decompose the unitary into a product of 2-qubit gates and 1-qubit gates, up to a global phase.
The second step is to approximate each of those smaller gates by circuits of Clifford+T gates.
We can view the second step as solving a synthesis problem: given a small unitary, find a circuit that is a small distance away and has good properties.
The synthesis problem is important for many applications and has a rich literature.
For example, \citeN{RossSelinger16} gave an algorithm for approximating the 1-qubit $R_z(\theta)$ gates, using the operator-norm distance and producing circuits with few T-gates.
For another example, \citeN{quest-asplos22} gave an algorithm for approximating small unitaries, using the Hilbert-Schmidt distance and producing circuits with good properties such as low depth.
For a third example, \citeN{lbnl25} gave an algorithm for approximating small unitaries, using the operator-norm distance, the Frobenius distance, and the trace-norm distance.
In summary, the community uses a variety of distance measures to support approximation.  This raises the question of which of those distance measures, along with others, are sufficient to state and prove universality of Clifford+T.  Any such distance measure must combine approximation with equality up to a global phase. We crystallize these considerations into the following question about the distance measure.

\begin{quotation}
\noindent
{\em Which properties are sufficient to state and prove universality of Clifford+T?}
\end{quotation}

We answer this question by defining projective distance measures using four axioms. We prove that the universality theorem holds for every distance measure satisfying these axioms. Thus, our formulation of the theorem is independent of any particular choice of distance measure. We show that the Hilbert--Schmidt distance is already a projective distance measure. We also develop a general construction that transforms a large class of distance measures into projective distance measures and apply it to obtain projective versions of the operator-norm distance, the Frobenius distance, and the trace distance. 
We have formalized our proof of the universality theorem in Lean.

\paragraph{The Rest of the Paper}
In Section~\ref{sec:quantum-concepts} we introduce the quantum concepts that we will use in the paper.
In Section~\ref{sec:distance-measures} we give examples and properties of distance measures, and we present our four axioms.
In Section~\ref{sec:the-universality-theorem} we state the universality theorem and two key lemmas, and we give a high-level proof.
In Sections~\ref{sec:clifford-plus-rz-is-universal}--\ref{sec:rz-can-be-approximated}, we prove those two lemmas, with the help of well-known lemmas listed in Appendix~\ref{sec:decomposition-lemmas}.
First, in Section~\ref{sec:clifford-plus-rz-is-universal}, we give a variant of a known proof \cite{shende2005synthesis} that Clifford+$R_z(\theta)$ is universal , 
and then in Section~\ref{sec:rz-can-be-approximated}, we give a variant of a known proof \cite{boykin1999universal, RossSelinger16} that $R_z(\theta)$ can be approximated.
We highlight every place in the proofs where we use the four axioms that define projective distance measures.

\section{Quantum Concepts}
\label{sec:quantum-concepts}

We recall the Pauli matrices $X,Y,Z$, two matrices $S,T$ that satisfy $S^2 = Z$ and $T^2 = S$, the identity matrix $I$, the Hadamard matrix $H$, the $\C{X}$ matrix, and three rotation matrices $R_x(\theta), R_y(\theta), R_z(\theta)$:
\[
\begin{array}{rclrclrclrcl}
X &=&
\left(
\begin{array}{cc}
 0 & 1 \\
 1 & 0
\end{array}
\right)
&
Y &=&
\left(
\begin{array}{cc}
 0 & -i \\
 i & 0
\end{array}
\right)
&
Z &=&
\left(
\begin{array}{cc}
 1 & 0 \\
 0 & -1
\end{array}
\right)
\\ \\
S &=&
\left(
\begin{array}{cc}
 1 & 0 \\
 0 & i
\end{array}
\right)
&
T &=&
\left(
\begin{array}{cc}
 1 & 0 \\
 0 & e^{i\pi/4}
\end{array}
\right)
&
I &=&
\left(
\begin{array}{cc}
 1 & 0 \\
 0 & 1
\end{array}
\right)
&
H &=&
\frac{1}{\sqrt{2}}
\left(
\begin{array}{cc}
 1 & 1 \\
 1 & -1
\end{array}
\right)
\end{array}
\]
\(
\C{X} &=&
\left(
\begin{array}{cccc}
 1 & 0 & 0 & 0 \\
 0 & 1 & 0 & 0 \\
 0 & 0 & 0 & 1 \\
 0 & 0 & 1 & 0 \\
\end{array}
\right)
\)
\[
\begin{array}{rclclcl}
R_x(\theta) &=& 
e^{-i\theta X/2}
&=&
\cos(\theta/2)\ I - i\sin(\theta/2)\ X
&=&
\left(
\begin{array}{cc}
\cos(\theta/2)    & -i \sin(\theta/2) \\
-i \sin(\theta/2) & \cos(\theta/2)
\end{array}
\right)
\\
R_y(\theta) &=&
e^{-i\theta Y/2}
&=&
\cos(\theta/2)\ I - i\sin(\theta/2)\ Y
&=&
\left(
\begin{array}{cc}
\cos(\theta/2)    & -\sin(\theta/2) \\
\sin(\theta/2) & \cos(\theta/2)
\end{array}
\right)
\\
R_z(\theta) &=&
e^{-i\theta Z/2}
&=&
\cos(\theta/2)\ I - i\sin(\theta/2)\ Z
&=&
\left(
\begin{array}{cc}
e^{-i\theta/2} & 0 \\
0              & e^{i\theta/2}
\end{array}
\right)
\end{array}
\]
Those matrices satisfy the following identities:
\begin{eqnarray}
R_x(\theta) &=& H\ R_z(\theta)\ H
\label{eq:rx-in-terms-of-rz}
\\
R_y(\theta) &=& S\ H\ R_z(\theta)\ H\ S^\dagger
\label{eq:ry-in-terms-of-rz}
\end{eqnarray}
For a matrix $U$, we use $U^\dag$ to denote the conjugate transpose of $U$.
A $2^n \times 2^n$ matrix $U$ is unitary if $U U^\dag = U^\dag U = I^{\otimes n}$.
All the above matrices are unitary, and they are also called gates.
We use $U(2^n)$ to denote the set of unitaries of size $2^n \times 2^n$, and we call them $n$-qubit unitaries.
We say that two unitaries $U$ and $V$ are equal up to a global phase,
written $U \sim V$, if there exists a real number $\alpha$ such that
$U = e^{i\alpha} V$, where $e^{i\alpha}$ is called the global phase.

We use the term circuit to mean a product of unitaries. 
The gates $\{\C{X},X,Y,Z,I,H,S\}$ are called {\em Clifford gates} and
can be generated from the smaller set $\{\C{X}, H, S\}$.
In contrast, $T$ is not a Clifford gate, and, in general, 
$R_x(\theta), R_y(\theta), R_z(\theta)$ are not Clifford gates, either.

A square matrix is Hermitian if $A = A^\dag$.  Among the above matrices $X, Y, Z, I, H, \C{X}$ are Hermitian.
For a unitary $U = e^{i\alpha A}$, where 
$\alpha$ is a real number and
$A$ is a Hermitian matrix,
we define $U^a = e^{i a\alpha A}$.
We will apply this definition of exponentiation only to matrices that are written $e^{i \alpha A}$.

In circuit diagrams, we draw the $X$ gate as $\oplus$, a single control as $\bullet$, and a bundle of qubits on a single wire as \,/\,.
Following \cite{ShendeBullockMarkov06}, we use the notation $\Box$ to indicate that a unitary is diagonal on the corresponding qubit and commutes with $Z$ on that qubit.

\section{Distance Measures}
\label{sec:distance-measures}

Here are definitions of the four distance measures mentioned in Section~\ref{sec:introduction}, for $U,V \in U(2^n)$:
\[
\begin{array}{rrclrl}
\text{Hilbert-Schmidt distance:} &
\hs(U,V) &=& 
\sqrt{ 1 - \left(\frac{ | \operatorname{Tr}( U^\dagger V ) | }{ 2^n }\right)^2 }
\label{eq:definition-of-hilbert-schmidt}
\\[0.2cm]
\text{Trace distance:} &
\trace(U,V)
&=&
\|U-V\|_{\mathrm 1} &
\text{where} & \|A\|_{\mathrm 1} = \operatorname{Tr}(\sqrt{A^\dagger A})
\\[0.2cm]
\text{Operator-norm distance:} &
\oper(U,V) &=&
\|U-V\|_{2} &
\text{where} & \|A\|_{2} = \max_{\|x\|=1}\|Ax\|
\\[0.2cm]
\text{Frobenius distance:} &
\frob(U,V) &=& \|U-V\|_{\mathrm F} &
\text{where} &
\|A\|_{\mathrm F}
=
\sqrt{\operatorname{Tr}(A^\dagger A)}
\end{array}
\]
Those distance measures share some properties, four of which we consolidate into the following notion of a traditional distance measure.

\begin{definition}
A \emph{traditional distance measure} is a function
\[
d : U(2^n)\times U(2^n)\to\mathbb R_{\geq 0}
\]
such that:
\begin{description}
\item[Reflexivity:]
$d(U,U) = 0$.

\item[Common-Phase Invariance:]
For every $\theta\in\mathbb R:\ d(e^{i\theta}U,e^{i\theta}V)=d(U,V)$.

\item[Subadditivity:]
$d(U_1U_2,V_1V_2)
\leq
d(U_1,V_1)+d(U_2,V_2)$.

\item[Continuity:]
For every $\epsilon>0$, every real number $t_0$, and every
Hermitian matrix $A$, there exists $\delta>0$ such that, for every
real number $t$, if $|t_0 - t| < \delta$, 
then $d(e^{i t_0 \pi A},\ e^{i t \pi A}) < \epsilon$.
\end{description}
\end{definition}

Intuitively, {\bf Reflexivity} says that an operation has zero distance from itself, while {\bf Common-Phase Invariance} says that changing the global phase in the same way does not change the distance, and {\bf Subadditivity} says that distances add up when circuits are multiplied.
Finally, {\bf Continuity} says that the function that maps $t$ to $e^{i t \pi A}$ is continuous when we use $d$ to measure the distance between two unitaries.

\begin{lemma}
\label{lem:DTOS-are-traditional-distance-measures}
$\hs,\trace,\oper,\frob$ are traditional distance measures.
\end{lemma}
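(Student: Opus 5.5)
We split the verification into the three norm-induced measures $\trace,\oper,\frob$, which can be handled uniformly, and the Hilbert--Schmidt distance $\hs$, which needs a separate argument. For a norm $\|\cdot\|\in\{\|\cdot\|_1,\|\cdot\|_2,\|\cdot\|_F\}$ we write $d(U,V)=\|U-V\|$.

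\textbf{Norm-induced measures.} Reflexivity is immediate. Common-Phase Invariance follows from absolute homogeneity, since $\|e^{i\theta}(U-V)\|=|e^{i\theta}|\,\|U-V\|=\|U-V\|$.

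For Subadditivity we write
\[
U_1U_2-V_1V_2=U_1(U_2-V_2)+(U_1-V_1)V_2
\]
and apply the triangle inequality. The key fact is unitary invariance: $\|WA\|=\|AW\|=\|A\|$ for every unitary $W$. For $\oper$ this holds because $W$ preserves vector norms. For $\frob$ it follows from $\operatorname{Tr}((WA)^\dagger WA)=\operatorname{Tr}(A^\dagger A)$ together with cyclicity of the trace. For $\trace$ it holds because $\sqrt{(WA)^\dagger(WA)}=\sqrt{A^\dagger A}$, while right multiplication conjugates $A^\dagger A$ by $W$ and so preserves the trace of its square root.

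For Continuity, fix a Hermitian $A$ and diagonalize it as $A=Q\Lambda Q^\dagger$ with real eigenvalues $\lambda_j$. Then
\[
e^{it_0\pi A}-e^{it\pi A}=Q\,\mathrm{diag}\bigl(e^{it_0\pi\lambda_j}-e^{it\pi\lambda_j}\bigr)\,Q^\dagger .
\]
By unitary invariance, the distance equals the norm of this diagonal matrix. Its entries are bounded in modulus by $\pi|\lambda_j|\,|t-t_0|$. The diagonal matrix therefore has norm at most $2^n\pi\max_j|\lambda_j|\,|t-t_0|$ in each of the three norms, since each of these norms is at most the sum of the moduli of the entries. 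Choosing $\delta$ accordingly gives Continuity.

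\textbf{Hilbert--Schmidt distance.} Reflexivity holds because $\operatorname{Tr}(U^\dagger U)=2^n$. Common-Phase Invariance holds because the phases cancel in $U^\dagger V$. Continuity follows from diagonalizing as above: $|\operatorname{Tr}(e^{i(t-t_0)\pi A})|$ is continuous in $t$ and equals $2^n$ at $t=t_0$, and the map $x\mapsto\sqrt{1-x^2}$ is continuous.

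The main obstacle is Subadditivity for $\hs$. The plan is to relate $\hs$ to a phase-minimized Frobenius distance. Let $N=2^n$ and $c=|\operatorname{Tr}(U^\dagger V)|/N\in[0,1]$. Then
\[
\min_\phi \frac{1}{2N}\,\|U-e^{i\phi}V\|_F^2=1-c .
\]
Now $\hs=\sqrt{1-c^2}=\sqrt{(1-c)(1+c)}$, and this is neither a norm nor obviously subadditive. The approach we would try first is to define the angle $\theta(U,V)=\arccos c\in[0,\pi/2]$ and show that $\theta$ satisfies the triangle inequality and is invariant under unitary multiplication. The triangle inequality holds because $\theta$ is the Fubini--Study angle between the unit vectors $U/\sqrt N$ and $V/\sqrt N$ in $\mathbb C^{N^2}$ up to phase, which is a metric on projective space. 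Unitary invariance holds because $|\operatorname{Tr}((WU)^\dagger WV)|=|\operatorname{Tr}(U^\dagger V)|$, and the same is true for right multiplication by cyclicity.

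From these two facts, the usual decomposition gives
\[
\theta(U_1U_2,V_1V_2)\le \theta(U_1,V_1)+\theta(U_2,V_2).
\]
We then need $\sin(a+b)\le \sin a+\sin b$ for $a,b\in[0,\pi/2]$, which holds because $\sin(a+b)=\sin a\cos b+\cos a\sin b$. The case $a+b>\pi/2$ requires care: there the left side is still at most $1$, and we need $\sin a+\sin b\ge 1$. This follows from concavity of $\sin$ on $[0,\pi/2]$: since $b>\pi/2-a$, we get $\sin a+\sin b\ge\sin a+\cos a\ge 1$. We expect the projective triangle inequality for $\theta$ to be the most delicate part, and we would cite it as a standard fact about the Fubini--Study metric.
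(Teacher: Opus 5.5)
Your proof is correct. For $\trace,\oper,\frob$ it is in substance the paper's argument: the triangle inequality plus unitary invariance, with a trivially different splitting of $U_1U_2-V_1V_2$. Your Continuity argument via diagonalization is more explicit than the paper's. It gives a Lipschitz bound $2^n\pi\max_j|\lambda_j|\,|t-t_0|$, where the paper appeals to $\lim_{s\to0}e^{is\pi A}=I^{\otimes n}$.

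The genuine difference is Subadditivity of $\hs$.
\begin{itemize}
\item The paper never passes to angles. It proves the rearrangement identity $\hs(U_1U_2,V_1V_2)=\hs(V_1^\dagger U_1,V_2U_2^\dagger)$ from cyclicity of the trace. It then applies $\hs(U,V)\le\hs(U,I^{\otimes n})+\hs(I^{\otimes n},V)$, cited from Wang and Zhang. Finally it uses specializations of the rearrangement identity to return to $\hs(U_1,V_1)+\hs(U_2,V_2)$.
\item You instead prove subadditivity for the Fubini--Study angle $\theta=\arccos c$. This uses its triangle inequality and two-sided unitary invariance, routed through the intermediate point $V_1U_2$. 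You then transfer the result to $\hs=\sin\theta$ with the elementary estimate on $\sin$. You correctly handle the case $a+b>\pi/2$, where $\sin$ stops being monotone.
\end{itemize}

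Your route's only external input is the textbook fact that the Fubini--Study distance is a metric, at the cost of the extra trigonometric step. The paper's route works directly with $\hs$ and is shorter. However, it depends on the more specialized Wang--Zhang inequality, which is exactly the triangle inequality for $\sin\theta$ with middle point $I^{\otimes n}$.

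One small wording point: in your last case, $\sin b\ge\cos a$ follows from monotonicity of $\sin$ on $[0,\pi/2]$, not from concavity, and $\sin a+\cos a\ge 1$ follows from squaring. Concavity would instead give $\sin a+\sin b\ge\frac{2}{\pi}(a+b)>1$, which also works.
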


\begin{proof}
See Appendix~\ref{app:proof-that-DTOF-are-traditional-distance-measures}.
\end{proof}

Traditional distance measures are insufficient for stating and proving that Clifford+T is universal.
The missing piece of the puzzle is how a distance measure works for unitaries that are equal up to global phase.
We need a variant of {\bf Common-Phase Invariance}, which we call {\bf Projective Invariance} and which says that any changes to the global phases do not change the distance.
This leads us to define a subset of the traditional distance measures, which we call the projective distance measures.

\begin{definition}
A \emph{projective distance measure} 
is a function $d : U(2^n)\times U(2^n)\to\mathbb R_{\geq 0}$ that satisfies 
{\bf Reflexivity}, {\bf Subadditivity}, {\bf Continuity}, and 
\begin{description}
\item[Projective Invariance:]
If $U_1 \sim U_2$ and $V_1 \sim V_2$, then $d(U_1,V_1) = d(U_2,V_2)$. 
\end{description}
\end{definition}

Notice that every projective distance measure is also a traditional distance measure.
However, among the four distance measures above, only $\hs$ is a projective distance measure.  

\begin{lemma}
\label{lem:D-is-a-projective-distance-measure}
$\hs$ is a projective distance measure.
\end{lemma}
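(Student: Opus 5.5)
By Lemma~\ref{lem:DTOS-are-traditional-distance-measures}, $\hs$ is already a traditional distance measure. It therefore satisfies {\bf Reflexivity}, {\bf Subadditivity} and {\bf Continuity}, which are exactly three of the four axioms of a projective distance measure. The plan is to reuse these and prove only {\bf Projective Invariance} directly from the definition.

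Suppose $U_1 \sim U_2$ and $V_1 \sim V_2$, say $U_1 = e^{i\alpha}U_2$ and $V_1 = e^{i\beta}V_2$ with $\alpha,\beta$ real. Using conjugate-linearity of the adjoint and linearity of the trace, we get
\[
\operatorname{Tr}(U_1^\dagger V_1) = \operatorname{Tr}\bigl(e^{-i\alpha}U_2^\dagger\, e^{i\beta}V_2\bigr) = e^{i(\beta-\alpha)}\operatorname{Tr}(U_2^\dagger V_2).
\]
Since $|e^{i(\beta-\alpha)}| = 1$, the absolute values of the two traces coincide. The definition of $\hs$ depends on $U$ and $V$ only through $|\operatorname{Tr}(U^\dagger V)|$, so $\hs(U_1,V_1) = \hs(U_2,V_2)$, which is {\bf Projective Invariance}.

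If one prefers not to cite Lemma~\ref{lem:DTOS-are-traditional-distance-measures} wholesale, the other axioms can be checked directly as well.
\begin{itemize}
\item {\bf Reflexivity}: $\operatorname{Tr}(U^\dagger U) = \operatorname{Tr}(I) = 2^n$, so $\hs(U,U) = 0$.
\item {\bf Continuity}: $t \mapsto \operatorname{Tr}(e^{-it_0\pi A}e^{it\pi A})$ is continuous in $t$, and composing with the continuous map $x \mapsto \sqrt{1-(|x|/2^n)^2}$ gives continuity of $\hs$.
\item {\bf Subadditivity}: this is the only nontrivial axiom.
\end{itemize}

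For {\bf Subadditivity}, I would first show that $\hs(U,V) = \min_\phi \frac{1}{\sqrt{2}\,2^{n/2}}\|U - e^{i\phi}V\|_{\mathrm F}$ (or that $\hs$ equals the sine of a projective angle). This should follow by expanding $\|U-e^{i\phi}V\|_{\mathrm F}^2 = 2\cdot 2^n - 2\,\mathrm{Re}(e^{i\phi}\operatorname{Tr}(U^\dagger V))$ and optimizing over $\phi$. Then one can argue via a triangle inequality for the angle $\arccos(|\operatorname{Tr}(U^\dagger V)|/2^n)$ together with unitary invariance. I expect this step to be the main obstacle, since the raw square-root formula is not obviously subadditive. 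However, it is covered by Lemma~\ref{lem:DTOS-are-traditional-distance-measures}, so in the proof of the present lemma it suffices to invoke that result.
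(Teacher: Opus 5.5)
Your proposal is correct and matches the paper's proof. The paper likewise inherits Reflexivity, Subadditivity and Continuity from the argument for Lemma~\ref{lem:DTOS-are-traditional-distance-measures}, and it gets Projective Invariance by pulling the phases $e^{-i\alpha}$ and $e^{i\beta}$ out of the trace, as in the Common-Phase Invariance calculation.

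One aside: your optional Frobenius identity is wrong. Minimizing over $\phi$ gives $\sqrt{1-|\operatorname{Tr}(U^\dagger V)|/2^n}$, not $\hs(U,V)=\sqrt{1-(|\operatorname{Tr}(U^\dagger V)|/2^n)^2}$; this does not affect your proof, since you rely on Lemma~\ref{lem:DTOS-are-traditional-distance-measures} for Subadditivity.
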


\begin{proof}
Straightforward, using the proof technique in Appendix~\ref{app:proof-that-DTOF-are-traditional-distance-measures}.
\end{proof}

We can observe that $\trace,\oper,\frob$ all fail to satisfy {\bf Projective Invariance}. 
Specifically, for the identity matrix $I$, we have $\trace(I,I) = \oper(I,I) = \frob(I,I) = 0$ and $I\sim(-I)$, but
\[
\begin{array}{rcccccccccl}
\trace(I,-I) &=& \|2I\|_{\mathrm 1} &=& \operatorname{Tr}(\sqrt{(2I)^\dagger (2I)}) &=& 2 \operatorname{Tr}(I) &=& 4 &\neq& 0 
\\[0.1cm]
\oper(I,-I) &=& \|2I\|_{2} &=& \max_{\|x\|=1}\|(2I)x\| &=& \max_{\|x\|=1}\|2x\| &=& 2 &\neq& 0
\\[0.1cm]
\frob(I, -I) &=& \|2I\|_{\mathrm F} &=& \sqrt{\operatorname{Tr}((2I)^\dagger (2I))} &=& 2 \sqrt{\operatorname{Tr}(I)} &=& 2 \sqrt{2} &\neq& 0
\end{array}
\]
The challenge is to formalize what it means to say ``a traditional distance measure plus the idea of equal up to a global phase''.
We will do that based on the following lemma.

\begin{lemma}
\label{lem:projective-minimum-exists}
Let $d$ be a traditional distance measure.  For every pair of unitary
matrices $U,V$, there exists $\theta_0\in\mathbb R$ such that
\[
d(U,e^{i\theta_0}V)
=
\inf_{\theta\in\mathbb R} d(U,e^{i\theta}V).
\]
\end{lemma}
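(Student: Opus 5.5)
We will show that $f(\theta) = d(U, e^{i\theta}V)$ is continuous and $2\pi$-periodic on $\mathbb R$, and then apply the extreme value theorem on the compact interval $[0,2\pi]$. The minimum over $[0,2\pi]$ is then also the infimum over all of $\mathbb R$, and any minimizer can serve as $\theta_0$. Note that the statement does not assume $d$ is symmetric, so each step below keeps track of argument order.

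\emph{Periodicity.} Since $e^{i(\theta+2\pi)} = e^{i\theta}$, we have $e^{i(\theta+2\pi)}V = e^{i\theta}V$ as matrices, so $f(\theta+2\pi) = f(\theta)$ trivially. Hence $\{f(\theta) : \theta\in\mathbb R\} = \{f(\theta) : \theta\in[0,2\pi]\}$, and the infimum over $\mathbb R$ equals the infimum over $[0,2\pi]$.

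\emph{Continuity of $f$.} Write $e^{i\theta}V = (e^{i\theta}I)V$, and set $A = I$ (which is Hermitian) so that $e^{i\theta}I = e^{i (\theta/\pi)\pi A}$. Subadditivity with $U = U\cdot I$ and Reflexivity give
\[ f(\theta) = d(U\cdot I,\ V\cdot e^{i\theta}I) \le d(U, e^{i\theta_1}V) + d(I\cdot I,\ e^{-i\theta_1}I\cdot e^{i\theta}I), \]
where we factor $e^{i\theta}V = (e^{i\theta_1}V)(e^{i(\theta-\theta_1)}I)$, using that scalar matrices commute. More cleanly:
\[ f(\theta) \le d(U, e^{i\theta_1}V) + d(I, e^{i(\theta-\theta_1)}I) = f(\theta_1) + d(I, e^{i(\theta-\theta_1)}I). \]
By Continuity applied with $A=I$ and $t_0=0$, the last term is less than $\epsilon$ whenever $|\theta-\theta_1|$ is small. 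Exchanging the roles of $\theta$ and $\theta_1$ gives $f(\theta_1)\le f(\theta) + d(I, e^{i(\theta_1-\theta)}I)$, which is again small. Together these show $|f(\theta)-f(\theta_1)|<\epsilon$, so $f$ is continuous. This is the one delicate point: both directions must use only $d(I,e^{is}I)$ with the same argument order, which they do, since $s$ ranges over small values of either sign and Continuity with $t_0=0$ covers both signs.

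\emph{Conclusion.} By the extreme value theorem, the continuous function $f$ attains its minimum on $[0,2\pi]$ at some $\theta_0$. By periodicity, $f(\theta_0) = \inf_{\theta\in\mathbb R} f(\theta)$.

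The main obstacle is the continuity step, specifically the bookkeeping needed to apply Subadditivity correctly to products whose factors sit in the right order without assuming symmetry of $d$. Choosing the scalar factor $e^{is}I$ as the second factor, with $I$ as its counterpart, resolves this.
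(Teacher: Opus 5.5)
Your proposal is correct and follows the paper's proof essentially step for step. You bound $f(\theta)\le f(\theta_1)+d(I,e^{i(\theta-\theta_1)}I)$ via Subadditivity, control the error term by Continuity at $t_0=0$ with $A=I$ (rescaling $\delta=\pi\delta'$), swap the two angles, and finish with $2\pi$-periodicity and the Extreme Value Theorem on $[0,2\pi]$; your appeal to Reflexivity in the continuity step is unnecessary but harmless.
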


\begin{proof}
See Appendix~\ref{app:proof-that-projective-minimum-exists}.
\end{proof}

By Lemma~\ref{lem:projective-minimum-exists}, the infimum of $d(U,e^{i\theta}V)$ is always equal to a minimum. In other words, the minimum exists. We therefore define the projectivization operator
$\,\widehat{\cdot}\,$ as follows:
\[
\widehat d(U,V)
=
\min_{\theta\in\mathbb R}
d(U,e^{i\theta}V).
\]

\begin{lemma}
\label{lem:from-traditional-distance-measure-to-projective-distance-measure}
If $d$ is a traditional distance measure, then $\widehat d$ is a projective
distance measure.
\end{lemma}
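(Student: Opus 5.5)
We check the four axioms of a projective distance measure for $\widehat d$ in turn, using only the four axioms of a traditional distance measure for $d$ together with Lemma~\ref{lem:projective-minimum-exists}, which guarantees that the minimum in the definition of $\widehat d$ is attained. Throughout, we use the elementary observation that $\widehat d(U,V) \le d(U,e^{i\theta}V)$ for every $\theta$, and in particular $\widehat d(U,V)\le d(U,V)$.

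\textbf{Reflexivity} and \textbf{Continuity} follow from this observation. Since $d\ge 0$, we get $0 \le \widehat d(U,U) \le d(U,U) = 0$. For continuity, given $\epsilon$, $t_0$ and $A$, we take the $\delta$ supplied by \textbf{Continuity} of $d$. Then $\widehat d(e^{it_0\pi A},e^{it\pi A}) \le d(e^{it_0\pi A},e^{it\pi A}) < \epsilon$ whenever $|t_0-t|<\delta$.

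\textbf{Projective Invariance} needs more care. First we show invariance in the second argument: $\widehat d(U,e^{i\beta}V)=\widehat d(U,V)$, because $\{d(U,e^{i\theta}e^{i\beta}V):\theta\in\mathbb R\}$ is the same set as $\{d(U,e^{i\theta}V):\theta\in\mathbb R\}$ after the substitution $\theta\mapsto\theta+\beta$. Next we show invariance in the first argument. By \textbf{Common-Phase Invariance}, $d(e^{i\alpha}U,e^{i\theta}V)=d(U,e^{i(\theta-\alpha)}V)$, and reindexing again gives $\widehat d(e^{i\alpha}U,V)=\widehat d(U,V)$. Combining the two, if $U_1=e^{i\alpha}U_2$ and $V_1=e^{i\beta}V_2$, then $\widehat d(U_1,V_1)=\widehat d(U_2,V_2)$.

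\textbf{Subadditivity} is the main step. By Lemma~\ref{lem:projective-minimum-exists} we choose $\theta_1,\theta_2$ with $\widehat d(U_1,V_1)=d(U_1,e^{i\theta_1}V_1)$ and $\widehat d(U_2,V_2)=d(U_2,e^{i\theta_2}V_2)$. (Even without attainment, an $\epsilon$-argument with near-minimizers would work.) Then
\[
\widehat d(U_1U_2,V_1V_2)\le d\bigl(U_1U_2,\ e^{i(\theta_1+\theta_2)}V_1V_2\bigr) = d\bigl(U_1U_2,\ (e^{i\theta_1}V_1)(e^{i\theta_2}V_2)\bigr).
\]
Applying \textbf{Subadditivity} of $d$ to the right-hand side bounds it by $d(U_1,e^{i\theta_1}V_1)+d(U_2,e^{i\theta_2}V_2)=\widehat d(U_1,V_1)+\widehat d(U_2,V_2)$. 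The key point is that global phases are scalars, so they commute past $V_1$ and combine multiplicatively. The only other thing to confirm is that $\widehat d$ is well defined and nonnegative, which holds by Lemma~\ref{lem:projective-minimum-exists} and $d\ge0$.
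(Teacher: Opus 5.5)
Your proposal is correct and matches the paper's proof step for step. Reflexivity and Continuity come from $\widehat d\le d$ (the choice $\theta=0$), and Projective Invariance comes from Common-Phase Invariance plus reindexing the phase; the paper does this in a single chain of equalities rather than one argument at a time. Subadditivity comes from choosing minimizing phases via Lemma~\ref{lem:projective-minimum-exists} and merging the scalar phases. Your remark that near-minimizers would suffice agrees with the paper's note that the attainment lemma is used only in the Subadditivity step.
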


\begin{proof}
See Appendix~\ref{app:proof-of-from-traditional-distance-measure-to-projective-distance-measure}.
\end{proof}

\begin{lemma}
\label{cor:TOS-are-projective-distance-measures}
$\widehat\trace, \widehat\oper, \widehat\frob$ are projective distance measures.
\end{lemma}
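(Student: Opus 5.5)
This follows by composing two earlier results, so no new computation is needed. First, Lemma~\ref{lem:DTOS-are-traditional-distance-measures} states that $\trace$, $\oper$, and $\frob$ are traditional distance measures. Each of them therefore satisfies \textbf{Reflexivity}, \textbf{Common-Phase Invariance}, \textbf{Subadditivity}, and \textbf{Continuity} on $U(2^n)\times U(2^n)$.

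Second, for each $d\in\{\trace,\oper,\frob\}$, Lemma~\ref{lem:projective-minimum-exists} guarantees that the infimum $\inf_{\theta\in\mathbb R} d(U,e^{i\theta}V)$ is attained. Hence $\widehat d$ is well defined as a minimum, exactly as in the definition of the projectivization operator.

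Third, Lemma~\ref{lem:from-traditional-distance-measure-to-projective-distance-measure} applies to each of the three measures and shows that $\widehat\trace$, $\widehat\oper$, and $\widehat\frob$ are projective distance measures. I would write this as one uniform argument quantified over the three cases.

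The only possible obstacle lies in the cited lemmas, not in this corollary. The substantive work is checking that the three norm-based distances satisfy the four traditional axioms (Appendix~\ref{app:proof-that-DTOF-are-traditional-distance-measures}) and that projectivization preserves \textbf{Subadditivity} and \textbf{Continuity} (Appendix~\ref{app:proof-of-from-traditional-distance-measure-to-projective-distance-measure}). Both are taken as given here, so this proof reduces to the two-line instantiation above.
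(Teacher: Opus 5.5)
Your proposal is correct and matches the paper's proof: Lemma~\ref{lem:DTOS-are-traditional-distance-measures} shows that $\trace,\oper,\frob$ are traditional distance measures, and Lemma~\ref{lem:from-traditional-distance-measure-to-projective-distance-measure} then gives that $\widehat\trace,\widehat\oper,\widehat\frob$ are projective distance measures. Your appeal to Lemma~\ref{lem:projective-minimum-exists} for well-definedness is harmless but not a separate step, since the paper already uses that lemma in defining the projectivization operator.
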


\begin{proof}
Combine Lemma~\ref{lem:DTOS-are-traditional-distance-measures} and Lemma~\ref{lem:from-traditional-distance-measure-to-projective-distance-measure}.
\end{proof}


\section{The Universality Theorem}
\label{sec:the-universality-theorem}


For a projective distance measure $d$,
we say that a gate set is universal with respect to $d$ if,
for any unitary $U$ and any $\epsilon > 0$, 
we can use the gate set to define a circuit $C$
such that $d(U,C) < \epsilon$.
Theorem~\ref{thm:clifford-plus-t-is-universal} states that 
$\{\C{X}, H, T\}$ is universal with respect to any projective distance measure.

\begin{theorem}[\bf Clifford+T is universal]
\label{thm:clifford-plus-t-is-universal}
For any projective distance measure $d$,
any unitary $U$, and any $\epsilon > 0$, 
there exists a circuit $C$ built from $\{\C{X}, H, T\}$,
such that $d(U,C) < \epsilon$.
\end{theorem}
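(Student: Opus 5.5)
The proof has two stages, mirroring the roadmap of the introduction: an exact decomposition into Clifford gates and $R_z$ rotations, followed by approximation of each rotation. I will isolate two lemmas. The first (Clifford+$R_z$ is universal) says that every $U\in U(2^n)$ satisfies $U\sim C_0 R_z(\theta_1) C_1 R_z(\theta_2)\cdots R_z(\theta_m) C_m$. Here each $C_j$ is a Clifford circuit over $\{\C{X},H,S\}$, and each $R_z(\theta_j)$ acts on one qubit, tensored with identities. The second ($R_z$ can be approximated) says that for every $\theta$ and every $\epsilon'>0$ there is a circuit $W$ over $\{H,T\}$ with $d(R_z(\theta),W)<\epsilon'$. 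Given these, the theorem follows quickly. Since $S=T^2$, each Clifford $C_j$ is a $\{\C{X},H,T\}$ circuit. By \textbf{Projective Invariance}, $d(U,C)=d(C_0R_z(\theta_1)\cdots C_m,\,C)$. I take $C=C_0W_1C_1\cdots W_mC_m$ with $d(R_z(\theta_j),W_j)<\epsilon/m$. Repeated \textbf{Subadditivity}, combined with \textbf{Reflexivity} on the Clifford factors, gives $d(U,C)<m\cdot\epsilon/m=\epsilon$.

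There is one technical point. Subadditivity is stated on $n$-qubit unitaries, so I need $d(R_z(\theta)\otimes I, W\otimes I)$ to be small, and also embeddings on arbitrary qubits. To avoid any lifting lemma, I will state the second lemma directly for the $n$-qubit operator $e^{i t\pi A}$ with $A=Z_k/2$ acting on qubit $k$. Then \textbf{Continuity} is applied to that $n$-qubit Hermitian $A$.

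For the first lemma, I will follow Shende et al.\ using the decomposition lemmas of Appendix~\ref{sec:decomposition-lemmas}. The steps are the cosine-sine decomposition, then quantum multiplexors reduced to uniformly controlled rotations, then those reduced to $\C{X}$ gates and single-qubit rotations, and finally the ZYZ decomposition of one-qubit unitaries up to phase. Each $R_x$ and $R_y$ is rewritten via identities (\ref{eq:rx-in-terms-of-rz}) and (\ref{eq:ry-in-terms-of-rz}). This is exact algebra, and the only axiom used is that phases may be discarded, which is justified by \textbf{Projective Invariance}.

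The main obstacle is the second lemma. My plan uses $R=THTH$, which up to phase is a rotation $R_{\vec n}(\lambda)$ whose angle $\lambda$ satisfies $\cos(\lambda/2)=\cos^2(\pi/8)$, and $\lambda/\pi$ is irrational (the Boykin et al.\ argument). Writing $R\sim e^{i\lambda \vec n\cdot\vec\sigma/2}$, irrationality makes $\{k\lambda \bmod 4\pi\}$ dense. Therefore, for any target angle $t$, some power $R^k$ has angle arbitrarily close to $t$. \textbf{Continuity} in $t$, together with \textbf{Projective Invariance} to absorb the phases and the $4\pi$-periodicity, then gives $d(R_{\vec n}(t),R^k)$ small. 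Similarly, $HRH\sim R_{\vec m}(\lambda)$ with a non-parallel axis $\vec m$.

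The harder part is producing $R_z(\theta)$ exactly from rotations about $\vec n$ and $\vec m$: I will write $R_z(\theta)\sim R_{\vec n}(\alpha)R_{\vec m}(\beta)R_{\vec n}(\gamma)$, a generalized Euler decomposition, which applies since $\vec n,\vec m$ are non-parallel. Then I approximate each factor and combine them by \textbf{Subadditivity}. Establishing the irrationality of $\lambda/\pi$, via the fact that $\cos(\lambda/2)=\tfrac12+\tfrac{\sqrt2}{4}$ is not of the root-of-unity form (a Niven-type or cyclotomic argument), and the non-orthogonal Euler decomposition are the steps I expect to require the most care.
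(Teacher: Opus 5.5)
Your overall plan matches the paper's. You first synthesize $U$ exactly, up to phase, over Clifford$+R_z$ using the cosine--sine and multiplexor lemmas. You then approximate each $R_z(\theta_j)$ and glue the pieces together with Projective Invariance, Subadditivity and Reflexivity. Two small points: treat $m=0$ separately, since $\epsilon/m$ is undefined. Your remark about lifting single-qubit approximations to $n$ qubits, by applying Continuity to the lifted generator, is fair, and the paper leaves it implicit.

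The genuine gap is in your second lemma. The decomposition $R_z(\theta)\sim R_{\vec n}(\alpha)R_{\vec m}(\beta)R_{\vec n}(\gamma)$ does \emph{not} follow from $\vec n,\vec m$ being non-parallel. A three-factor Euler decomposition covers all of $SU(2)$ only when the axes are orthogonal. For merely non-parallel axes you only get a longer alternating product $R_{\vec n}R_{\vec m}R_{\vec n}R_{\vec m}\cdots$ (Nielsen and Chuang, Exercise~4.11). Your axes are $\vec n\propto(\cos\frac{\pi}{8},\sin\frac{\pi}{8},\cos\frac{\pi}{8})$ and $\vec m\propto(\cos\frac{\pi}{8},-\sin\frac{\pi}{8},\cos\frac{\pi}{8})$, and they meet at an angle $\phi\approx 33^\circ$. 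Since $R_{\vec n}(\gamma)$ fixes $\vec n$ and $R_{\vec n}(\alpha)$ preserves angles with $\vec n$, the product $R_{\vec n}(\alpha)R_{\vec m}(\beta)R_{\vec n}(\gamma)$ moves $\vec n$ through at most $2\phi\approx 65^\circ$. The rotation by $\pi$ about the $z$-axis, however, moves $\vec n$ through $\arccos\bigl(-\sin^2\frac{\pi}{8}/(2\cos^2\frac{\pi}{8}+\sin^2\frac{\pi}{8})\bigr)\approx 94^\circ$. So $R_z(\pi)$, and every $R_z(\theta)$ with $\theta$ close enough to $\pi$, has no such decomposition even up to phase. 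This step is false, not merely delicate.

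There are two easy repairs. First, you can use an alternating product $R_{\vec n}(\beta_1)R_{\vec m}(\gamma_1)\cdots$ with a fixed finite number $N$ of factors and approximate each factor within $\epsilon/N$; Subadditivity handles this exactly as before. Second, you can do what the paper does, which is precisely why it uses $G_1=e^{-3i\pi/8}\,THTHT\sim R_z(\pi/4)R_x(\pi/4)R_z(\pi/4)$ instead of $THTH$. The axis of $G_1$ lies in the $xz$-plane, and $G_2=(HS^2)G_1(S^2H)$ has an orthogonal axis. Hence $A_1,A_2$ anticommute, and the three-factor decomposition of Lemma~\ref{lem:generalized-euler-decomposition} applies. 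With that choice, your irrationality argument must be run for $\cos(\lambda_0\pi)=\frac12\sqrt{1+1/\sqrt2}$ instead of $\cos^2\frac{\pi}{8}$. It goes through the same way: if $\lambda_0$ were rational, $x=2\cos(\lambda_0\pi)$ would be an algebraic integer, so $x^2-1=1/\sqrt2$ would be one too, which is impossible.
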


Our proof uses a new gate $G_1$, which we define as follows.
\(
G_1 &=& e^{-3i\pi/8}\ THTHT 
\)
The gate $G_1$ enables us to do the proof of Theorem~\ref{thm:clifford-plus-t-is-universal} in three steps:
\[
\text{Any unitary} 
\stackrel{\text{Lemma}~\ref{lem:clifford-plus-rx-is-universal}}{\longrightarrow}
\{\C{X}, H, S, R_z(\theta)\} 
\stackrel{\text{Lemma}~\ref{lem:approximation-of-rz}}{\longrightarrow}
\{\C{X}, H, S, G_1\} 
\stackrel{\text{Immediate}}{\longrightarrow}
\{\C{X}, H, T\}
\]
In the first step, we use Lemma~\ref{lem:clifford-plus-rx-is-universal} to map any unitary $U$ to a circuit that is equal to $U$, up to a global phase, and built from $\{\C{X}, H, S, R_z(\theta)\}$.
Notice that $\C{X}, H, S$ are all Clifford gates, so the remaining challenge lies with the $R_z(\theta)$ gates.
In the second step, we use Lemma~\ref{lem:approximation-of-rz} to map any $R_z(\theta)$ gate to a circuit built from $\{H,S,G_1\}$, in a way that may require approximation.
Finally, in the third step, we use that $G_1$ is equal to a circuit built from $\{H, T\}$, up to a global phase.

\begin{lemma}[\bf Clifford+R$_z(\theta)$ is universal]
For any $n$-qubit unitary $U$,
there exists a circuit $C$ built from
$\{\C{X}, H, S, R_z(\theta)\}$ such that $U \sim C$.
\label{lem:clifford-plus-rx-is-universal}
\end{lemma}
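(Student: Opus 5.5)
We prove the lemma by induction on $n$, using the quantum Shannon decomposition of \cite{shende2005synthesis}, i.e.\ the standard decomposition lemmas collected in Appendix~\ref{sec:decomposition-lemmas}. Throughout we work only up to global phase, so any scalar factor that arises can be discarded.

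\textbf{Base case $n=1$.} Every $U\in U(2)$ admits a ZYZ Euler decomposition
\[
U = e^{i\alpha} R_z(\beta) R_y(\gamma) R_z(\delta).
\]
By identity~(\ref{eq:ry-in-terms-of-rz}), $R_y(\gamma) = S H R_z(\gamma) H S^\dagger$. We also have $S^\dagger = S^3$. Hence
\[
U \sim R_z(\beta)\, S H R_z(\gamma) H S^3\, R_z(\delta),
\]
which is a circuit over $\{H,S,R_z(\theta)\}$.

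\textbf{Inductive step, $n\ge 2$.} We first apply the cosine--sine decomposition with respect to the top qubit:
\[
U = (A_0\oplus A_1)\,\Bigl(\textstyle\sum_j R_y(\theta_j)\otimes\ket{j}\bra{j}\Bigr)\,(B_0\oplus B_1).
\]
The middle factor is a uniformly controlled $R_y$. Each block-diagonal factor $A_0\oplus A_1$ is a quantum multiplexor. We remove each multiplexor using the demultiplexing identity
\[
A_0\oplus A_1 = (I\otimes V)(D\oplus D^\dagger)(I\otimes W),
\]
where $V,W\in U(2^{n-1})$ and $D$ is diagonal. The factor $D\oplus D^\dagger$ is a uniformly controlled $R_z$ on the top qubit.

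We then use the standard recursive construction to write any uniformly controlled $R_z$ or $R_y$ rotation on $k$ controls as an alternating product of $2^k$ single-qubit rotations $R_z$ or $R_y$ on the target and $\C{X}$ gates. The rotation angles are obtained from the original angles by a Walsh--Hadamard-type linear transformation. We then convert each $R_y$ to $R_z$ via (\ref{eq:ry-in-terms-of-rz}).

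The $\C{X}$ gates in this construction act between arbitrary pairs of qubits, possibly with the control below the target. We obtain these from the adjacent, standard-orientation $\C{X}$ by conjugating with $H\otimes H$ to reverse orientation, and by composing with SWAPs, each of which is three $\C{X}$ gates. The remaining factors $V,W$ are $(n-1)$-qubit unitaries tensored with $I$. The induction hypothesis gives circuits $C_V\sim V$ and $C_W\sim W$. Since $I\otimes e^{i\alpha}C = e^{i\alpha}(I\otimes C)$, global phases pass through tensor products and products. Assembling all the pieces therefore yields a circuit $C\sim U$.

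\textbf{Main obstacle.} The delicate step is carrying out the demultiplexing and the uniformly-controlled-rotation decomposition carefully enough that every gate is literally in $\{\C{X},H,S,R_z(\theta)\}$ applied to the right qubits, and that only global phases (never relative phases) are dropped. For the diagonal $D$, we would first try diagonalizing $A_0A_1^\dagger = VD^2V^\dagger$ and choosing a square root $D$. Setting $W = DV^\dagger A_1$ makes the identity hold exactly, and $D\oplus D^\dagger$ is then exactly a product of $R_z$ rotations, with no phase issues.
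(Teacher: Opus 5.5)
Your proposal is correct and follows essentially the same route as the paper. Both argue by induction on $n$ via the quantum Shannon decomposition: the cosine--sine decomposition, then demultiplexing the block-diagonal factors into uniformly controlled $R_z$ gates and smaller unitaries, then the $\C{X}$/rotation decomposition of multiplexed rotations. The differences are minor: you use ZYZ rather than ZXZ in the base case, you convert $R_y$ to $R_z$ after decomposing the multiplexed rotation rather than before, and you explicitly verify the demultiplexing identity and handle $\C{X}$ orientation and adjacency, which the paper leaves implicit.
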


\begin{lemma}[\bf R$_z(\theta)$ can be approximated]
For any projective distance measure $d$,
any real number $\theta$, and $\epsilon > 0$,
there exists a circuit $C$ built from $\{H, S, G_1\}$
such that $d(R_z(\theta),C) < \epsilon$.
\label{lem:approximation-of-rz}
\end{lemma}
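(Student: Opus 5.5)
The plan follows the classical argument of \cite{boykin1999universal}, with the four axioms used only to transport exact identities and one-parameter approximations into $d$.

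\textbf{Step 1: $G_1$ is a rotation.} Multiplying out $THTHT$ and absorbing the phase $e^{-3i\pi/8}$ should give
\[
G_1 = \cos(\lambda/2)\,I - i\sin(\lambda/2)\,(\vec n\cdot\vec\sigma),
\]
where $\cos(\lambda/2)=\cos^2(\pi/8)$ and $\vec n$ is a unit vector proportional to $(\cos\frac{\pi}{8},\sin\frac{\pi}{8},\cos\frac{\pi}{8})$. Writing $A_n=\vec n\cdot\vec\sigma$, which is Hermitian, this says $G_1=e^{-i\lambda A_n/2}$, so $G_1^k=e^{-ik\lambda A_n/2}$ exactly. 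Since $HXH=Z$, $HYH=-Y$ and $HZH=X$, conjugation gives $HG_1H=e^{-i\lambda A_m/2}$ with $\vec m\propto(\cos\frac{\pi}{8},-\sin\frac{\pi}{8},\cos\frac{\pi}{8})$. This vector is not parallel to $\vec n$, and $HG_1^kH=(HG_1H)^k$ is a circuit over $\{H,G_1\}$.

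\textbf{Step 2: $\lambda/\pi$ is irrational.} This is where I expect the main obstacle. I would show that $e^{i\lambda}$ is not a root of unity. From $\cos\lambda = 2\cos^4(\pi/8)-1$ and $\cos^2(\pi/8)=(2+\sqrt2)/4$ we get $\cos\lambda = \frac{-1+2\sqrt2}{4}$ (to be rechecked). Then $e^{i\lambda}$ is a root of a degree-4 polynomial over $\mathbb Q$ whose minimal polynomial is not monic with integer coefficients. Hence $e^{i\lambda}$ is not an algebraic integer, whereas every root of unity is one. This gives irrationality.

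\textbf{Step 3: approximating arbitrary rotations about $\vec n$ and $\vec m$.} Fix a target angle $\alpha$ and $\epsilon'>0$. Apply \textbf{Continuity} with $A=-A_n/2$ and $t_0=\alpha/\pi$ to obtain a $\delta$. The rotation $e^{-i\beta A_n/2}$ changes only by the factor $-1$ when $\beta$ shifts by $2\pi$. So by \textbf{Projective Invariance} it suffices to match $\alpha$ modulo $2\pi$ with a representative $t\pi$ satisfying $|t-t_0|<\delta$. Because $\lambda/\pi$ is irrational, the pigeonhole argument shows that the multiples $k\lambda \bmod 2\pi$ are dense. Choose $k\geq 0$ with $k\lambda$ within $\delta\pi$ of $\alpha$ modulo $2\pi$. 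This gives $d(R_{\vec n}(\alpha),G_1^k)<\epsilon'$, and the same argument gives $d(R_{\vec m}(\alpha),HG_1^kH)<\epsilon'$.

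\textbf{Step 4: assembling $R_z(\theta)$.} Since $\vec n$ and $\vec m$ are non-parallel, the decomposition lemma in Appendix~\ref{sec:decomposition-lemmas} (the Nielsen--Chuang result on rotations about two non-parallel axes) gives angles with
\[
R_z(\theta)\sim R_{\vec n}(\alpha)\,R_{\vec m}(\beta)\,R_{\vec n}(\gamma).
\]
If that lemma only covers orthogonal axes, I would first conjugate by suitable Clifford gates from $\{H,S\}$; this uses \textbf{Reflexivity} for $d(H,H)=0$ and $d(S,S)=0$ together with \textbf{Subadditivity}. Now take $\epsilon'=\epsilon/3$ and let $C=G_1^{k_1}\,HG_1^{k_2}H\,G_1^{k_3}$. \textbf{Projective Invariance} lets us replace $R_z(\theta)$ by the product above. \textbf{Subadditivity} applied to the three factors then bounds the distance by the sum of the three individual distances, which gives $d(R_z(\theta),C)<\epsilon$.
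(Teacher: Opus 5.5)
Steps 2 and 3 are fine in outline. Step 3 is essentially the paper's lemma on approximating $G^a$ by $G^m$, with the sign $e^{-i\pi A_n}=-I$ absorbed by \textbf{Projective Invariance}. Step 4, however, fails.

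A product $R_{\vec n}(\alpha)R_{\vec m}(\beta)R_{\vec n}(\gamma)$ represents every single-qubit unitary up to phase only when $\vec n\perp\vec m$. Since $R_{\vec n}(\alpha)$ fixes $\vec n$, such a product moves $\vec n$ by the same angle as $R_{\vec m}(\beta)$ does. By the spherical triangle inequality that angle is at most $2\phi$, where $\phi\le\pi/2$ is the angle between the axes. With your vectors, $\cos\phi=(2\cos^2\frac{\pi}{8}-\sin^2\frac{\pi}{8})/(2\cos^2\frac{\pi}{8}+\sin^2\frac{\pi}{8})\approx 0.84$, so $2\phi\approx 65^\circ$. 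But $R_z(\pi)$ moves $\vec n$ by about $94^\circ$. Hence $R_z(\pi)$ has no decomposition of the form you substitute into $d$.

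The Nielsen--Chuang exercise for non-parallel axes only gives a longer alternating product $R_{\vec n}R_{\vec m}R_{\vec n}R_{\vec m}\cdots$. The decomposition lemma in the paper's appendix (generalized Euler decomposition) explicitly assumes $A_1A_2=-A_2A_1$, i.e.\ orthogonal axes. Your fallback of conjugating by Clifford gates does not help as stated. Conjugating the whole product by one Clifford preserves $\phi$; what you actually need is a Clifford conjugate of $G_1$ whose axis is orthogonal to that of $G_1$. For the axis $(\cos\frac{\pi}{8},\sin\frac{\pi}{8},\cos\frac{\pi}{8})$ you derived, no signed permutation of coordinates, hence no Clifford conjugation, produces an orthogonal vector.

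The underlying slip is in Step 1: that axis and $\cos(\lambda/2)=\cos^2\frac{\pi}{8}$ describe $THTH$, not $THTHT$. Since $e^{-3i\pi/8}THTHT=R_z(\pi/4)R_x(\pi/4)R_z(\pi/4)$, one gets $\cos(\lambda/2)=\cos(\pi/8)/\sqrt2$ and an axis in the $XZ$-plane proportional to $(1,0,1+1/\sqrt2)$. Your irrationality argument adapts, now with $2\cos\lambda=1/\sqrt2-1$.

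The paper exploits exactly this. Its second generator is $G_2=(HS^2)G_1(S^2H)$ rather than $HG_1H$, whose axis, proportional to $(1+1/\sqrt2,0,1)$, is still not orthogonal. This choice gives $G_i=e^{i\lambda_0\pi A_i}$ with $A_1=-pX-qZ$ and $A_2=-qX+pZ$ anticommuting. The generalized Euler decomposition then gives $R_z(\theta)=G_1^aG_2^bG_1^c$ exactly. Approximating each real power by an integer power and applying \textbf{Subadditivity} finishes the proof as in your Steps 3 and 4.

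Alternatively, you could keep $HG_1H$ and replace the three-factor decomposition by an alternating product of fixed length $K$, depending only on $\phi$, approximating each factor within $\epsilon/K$. That route requires a decomposition lemma that neither the paper nor your proposal supplies.
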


\begin{proof}[Proof of Theorem~\ref{thm:clifford-plus-t-is-universal}]
For a unitary $U$ and $\epsilon > 0$,
Lemma~\ref{lem:clifford-plus-rx-is-universal} says that
there exists a circuit $C_1$ built from $\{\C{X}, H, S, R_z(\theta)\}$
such that $U \sim C_1$.

Let $k$ be the number of $R_z$ gates in $C_1$, and
let $\theta_1, \ldots, \theta_k$ be the angles used in those gates.
We have two cases.

In the first case,
if $k=0$, then $C_1$ is built from $\{\C{X},H,S\}$.
Let $C_2$ be a version of $C_1$ in which we have replaced each $S$ by $T^2$.
Notice that $C_2 = C_1$ and $C_2$ is built from $\{\C{X},H,T\}$.
We get: 
\(
d(U,C_2) &=& d(U,C_1) \;\;=\;\; 
d(U,U) \;\;=\;\;
0 \;<\; \epsilon
\)
In the first step, we use $C_2 = C_1$.
In the second step, we use $U \sim C_1$ and the {\bf Projective Invariance} property.
In the third step, we use the {\bf Reflexivity} property.

In the second case, if $k>0$, then
from Lemma~\ref{lem:approximation-of-rz} we get circuits $W_j$ 
built from $\{H,S,G_1\}$ such that
\begin{eqnarray}
d(R_z(\theta_j),W_j) &<& \epsilon/k
\label{eq:approximation-of-r-z}
\end{eqnarray}
Let $C_3$ be a version of $C_1$ in which we have replaced each $R_z(\theta_j)$ by $W_j$.
Notice that $C_3$ is built from $\{\C{X},H,S,G_1\}$.
We calculate:
\begin{equation}
\label{eq:distance-from-u-to-c2}
d(U, C_3) \;\;=\;\; 
d(C_1, C_3) \;\;\leq\;\;
\sum_{j=1}^k d(R_z(\theta_j),W_j) \;\;<\;\;
k \cdot (\epsilon/k) \;\;=\;\; 
\epsilon
\end{equation}
In the first step, we use $U \sim C_1$ and the {\bf Projective Invariance} property.
In the second step, we use the {\bf Subadditivity} property to bound the distance between products by a sum of distances.  Notice that in that sum of distances, some of those distances are between equal unitaries, and in those cases, 
we use the {\bf Reflexivity} property to get that those distances are equal to 0.
In the third step, we use Equation~(\ref{eq:approximation-of-r-z}).

Let $C_4$ be a version of $C_3$ in which 
we have replaced each $S$ by $T^2$, and in which
we have replaced each $G_1$ by $THTHT$.
Notice that $C_4 \sim C_3$ and that
$C_4$ is built from $\{\C{X},H,T\}$.
We calculate:
\(
d(U,C_4) &=& d(U,C_3) \;<\; \epsilon
\)
In the first step, we use the {\bf Projective Invariance} property, and 
in the second step, we use {Equation~(\ref{eq:distance-from-u-to-c2}).}
\end{proof}

Notice that the above proof of Theorem~\ref{thm:clifford-plus-t-is-universal} uses only three of the four properties of the distance measure. We will use the fourth property in the proof of Lemma~\ref{lem:approximation-of-rz}.

\section{Clifford+\texorpdfstring{$R_z(\theta)$}{Rz(theta)} is Universal (Lemma~\ref{lem:clifford-plus-rx-is-universal})}
\label{sec:clifford-plus-rz-is-universal}


First, we show how to implement a multiplexed $R_z(\alpha)$ gate (Lemma~\ref{rzcount}), and then we prove Lemma~\ref{lem:clifford-plus-rx-is-universal} by induction.

\begin{lemma}
Let $R$ be the $n$-qubit gate 
$\sum_{i=0}^{2^{n-1}-1} R_z(\alpha_i) \otimes \outer{i}{i}$.
There exists a circuit $C$ built from $\{\C{X},R_z(\theta)\}$ such that $R = C$.
\label{rzcount}
\end{lemma}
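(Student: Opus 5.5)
We prove the statement by induction on $n$, using the standard multiplexor decomposition of \cite{ShendeBullockMarkov06}. Write $R = \sum_i R_z(\alpha_i)\otimes\outer{i}{i}$: the first qubit is the target, and the remaining $n-1$ qubits select the angle. For the base case $n=1$ there is a single angle, so $R = R_z(\alpha_0)$, which is one gate.

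For the inductive step, split the select qubits into the last one, $q$, and the others, $r$, and write $i = 2j + b$ with $b\in\{0,1\}$ the value of $q$. For each $j$ define
\[
\beta_j = \tfrac{1}{2}(\alpha_{2j}+\alpha_{2j+1}), \qquad \gamma_j = \tfrac{1}{2}(\alpha_{2j}-\alpha_{2j+1}).
\]
The plan rests on two facts.
\begin{itemize}
\item $X R_z(\gamma) X = R_z(-\gamma)$.
\item Rotations about $Z$ add: $R_z(a)R_z(b)=R_z(a+b)$.
\end{itemize}
Conjugating the target by $X$ exactly when $q=1$ is the same as sandwiching with a $\C{X}$ whose control is $q$ and whose target is the first qubit.

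Consider the product
\[
C = R_\beta \cdot \C{X}_{q\to 1} \cdot R_\gamma \cdot \C{X}_{q\to 1},
\]
where $R_\beta=\sum_j R_z(\beta_j)\otimes \outer{j}{j}_r\otimes I_q$ and $R_\gamma$ is defined the same way with $\gamma_j$. Each of $R_\beta$ and $R_\gamma$ acts trivially on $q$, so it is an $(n-1)$-qubit multiplexed $R_z$ tensored with identity. We check $C$ on each block $\outer{j}{j}_r\otimes\outer{b}{b}_q$:
\begin{itemize}
\item if $b=0$, the $\C{X}$ gates act trivially, and the target gets $R_z(\beta_j+\gamma_j)=R_z(\alpha_{2j})$;
\item if $b=1$, the target gets $R_z(\beta_j)\,X R_z(\gamma_j) X = R_z(\beta_j-\gamma_j)=R_z(\alpha_{2j+1})$.
\end{itemize}
Everything here is block-diagonal in the control basis, so checking each block is enough. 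Hence $C=R$ exactly, with no phase. By the induction hypothesis, $R_\beta$ and $R_\gamma$ are circuits over $\{\C{X},R_z(\theta)\}$ acting on the target and $r$ only, so $C$ is as well. The two $\C{X}$ gates are allowed because $\C{X}$ between any ordered pair of qubits counts as an instance of the gate set.

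The main obstacle is bookkeeping rather than mathematics. We must fix the index and qubit-ordering conventions so that the block decomposition $i=2j+b$ matches the tensor factors. We must also confirm that "built from $\C{X}$" allows $\C{X}$ on arbitrary, non-adjacent qubit pairs. If that is in doubt, we would first note that such a $\C{X}$ is obtained from adjacent ones by SWAPs, each made of three $\C{X}$ gates. Finally, the identity $XR_z(\gamma)X=R_z(-\gamma)$ should be checked directly from the diagonal form of $R_z$.
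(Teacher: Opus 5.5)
Your proposal is correct and follows the paper's proof. Both argue by induction on $n$, and in the inductive step both split the multiplexed $R_z$ gate into two smaller multiplexed $R_z$ gates interleaved with two $\C{X}$ gates. The only difference is that you verify this decomposition directly, with explicit angles $\beta_j,\gamma_j$ and the identity $XR_z(\gamma)X=R_z(-\gamma)$, whereas the paper simply cites it as Lemma~\ref{rzrz} (Theorem~8 of \cite{ShendeBullockMarkov06}).
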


\begin{proof}
We proceed by induction on $n$.

In the base case of $n=1$, we have that $R$ is a single $R_z(\theta)$ gate.

In the induction step, suppose the property holds for $n$, 
where $n \geq 1$.
Consider the case of an $(n+1)$-qubit multiplexed $R_z(\alpha)$ gate $R$.
From Lemma~\ref{rzrz}, we can implement $R$ using two $n$-qubit multiplexed $R_z(\alpha)$ gates and two $\C{X}$ gates. 
Now we apply the induction hypothesis to those two $n$-qubit multiplexed $R_z(\alpha)$ gates and arrive at the desired result.
\end{proof}


Recall that Lemma~\ref{lem:clifford-plus-rx-is-universal} states that
Clifford+R$_z(\theta)$ is universal:
for any $n$-qubit unitary $U$,
there exists a circuit $C$ built from
$\{\C{X}, H, S, R_z(\theta)\}$ such that $U \sim C$.

\begin{proof}[Proof of Lemma~\ref{lem:clifford-plus-rx-is-universal}]
Suppose $U$ is an $n$-qubit unitary.
We proceed by induction on $n$.

In the base case of $n=1$, we first write $U = e^{i\alpha}\ V$, where $\det(V) = 1$, and then based on that and Lemma~\ref{lem:zxz-decomposition} and {Equation~(\ref{eq:rx-in-terms-of-rz})}, we write 
\(
U &\sim& V \;\;=\;\; R_z(\alpha)\ R_x(\beta)\ R_z(\gamma) \;\;=\;\; R_z(\alpha)\ H\ R_z(\beta)\ H\ R_z(\gamma)
\)
The above expression is built from $\{H,R_z(\theta)\}$.

In the induction step, suppose Lemma~\ref{lem:clifford-plus-rx-is-universal} 
holds for $n$, where $n\geq 1$.  
Consider the case of where $U$ is an $(n+1)$-qubit unitary.
We calculate as follows.
\begin{eqnarray}
\Qcircuit @C=0.5em @R=1em {
& \qw & \multigate{2}{U}  & \qw \\
& \qw & \ghost{U} & \qw \\
& / \qw & \ghost{U}  & \qw 
}\nonumber
& \;\;\;\;\;\;\; &
\Qcircuit @C=0.5em @R=1em {
& \qw & \gate{} \qwx[1] & \gate{R}  & \gate{} \qwx[1] & \qw \\
\lstick{= \;\;\;\; } & \qw & \multigate{1}{Q} &  \gate{} \qwx[-1] & \multigate{1}{P} & \qw \\
& / \qw & \ghost{Q}  &  \gate{} \qwx[-1] & \ghost{P}  & \qw 
}\nonumber
\;\;\;\;\;\;\; \;\;\;\;\;\;\;\,
\Qcircuit @C=0.5em @R=1em {
& \qw & \gate{} \qwx[1] & \gate{S^\dag} & \gate{H} & \gate{R_1} & \gate{H} &\gate{S} & \gate{} \qwx[1] & \qw \\
\lstick{= \;\;\;\; } & \qw & \multigate{1}{Q} & \qw & \qw &  \gate{} \qwx[-1] & \qw & \qw & \multigate{1}{P} & \qw \\
& / \qw & \ghost{Q} & \qw & \qw &  \gate{} \qwx[-1] & \qw & \qw & \ghost{P}  & \qw 
}\nonumber
\\[0.4cm] &&
\Qcircuit @C=0.5em @R=1em {
& \qw & \qw & \gate{R_2} \qwx[1] & \qw & \gate{S^\dag} & \gate{H} & \gate{R_1} & \gate{H} &\gate{S} & \qw & \gate{R_0} \qwx[1] & \qw & \qw \\
\lstick{= \;\;\;\; } & \qw & \multigate{1}{N} & \gate{} \qwx[-1] & \multigate{1}{M} & \qw & \qw &  \gate{} \qwx[-1] & \qw & \qw & \multigate{1}{W} & \gate{} \qwx[-1] & \multigate{1}{V} & \qw \\
& / \qw & \ghost{N} & \gate{} \qwx[-1] & \ghost{M} & \qw & \qw &  \gate{} \qwx[-1] & \qw & \qw & \ghost{W} & \gate{} \qwx[-1] & \ghost{V}  & \qw
}\nonumber
\end{eqnarray}

In the first step, according to Lemma~\ref{cosinesine}, 
there exist $2^n$ 1-qubit gates $R_y(\theta_0)$, $R_y(\theta_1)$, $\cdots$, and $R_y(\theta_{2^n-1})$, four $n$-qubit gates $P_0$, $P_1$, $Q_0$, and $Q_1$, and three $(n+1)$-qubit gates $P = \outer{0}{0} \otimes P_{0} + \outer{1}{1} \otimes P_{1}$, $R = \sum^{2^n-1}_{i=0}R_y(\theta_i) \otimes \outer{i}{i}$, and $Q= \outer{0}{0} \otimes Q_{0} + \outer{1}{1} \otimes Q_{1}$, 
such that the first step is valid. 

In the second step, according to {Equation~(\ref{eq:ry-in-terms-of-rz})}, for $R = \sum^{2^n-1}_{i=0}R_y(\theta_i) \otimes \outer{i}{i}$ gate, there exists one $(n+1)$-qubit gate $R_1 = \sum^{2^n-1}_{i=0}R_z(\theta_i) \otimes \outer{i}{i}$, such that 
\begin{eqnarray*}
R & = & \sum^{2^n-1}_{i=0}R_y(\theta_i) \otimes \outer{i}{i} 
\;\;=\;\;
\sum^{2^n-1}_{i=0} ( S \cdot H \cdot R_z(\theta_i) \cdot H \cdot S^\dag )\otimes \outer{i}{i} \\
& = & \sum^{2^n-1}_{i=0} ( S \cdot H )\otimes \outer{i}{i} \cdot  \sum^{2^n-1}_{i=0}   R_z(\theta_i) \otimes \outer{i}{i}  \cdot \sum^{2^n-1}_{i=0} ( H \cdot S^\dag )\otimes \outer{i}{i} \\
& = & (S \otimes I^{\otimes n}) \cdot (H \otimes I^{\otimes n}) \cdot R_1 \cdot (H \otimes I^{\otimes n}) \cdot (S^\dag \otimes I^{\otimes n})
\end{eqnarray*}

In the third step, according to Lemma~\ref{demultiplexing}, for $P = \outer{0}{0} \otimes P_{0} + \outer{1}{1} \otimes P_{1} $ gate, there exist $2^n$ 1-qubit gates $R_z(\alpha_0)$, $R_z(\alpha_1)$, $\cdots$, and $R_z(\alpha_{2^n-1})$, two $n$-qubit gates $V$ and $W$, and one $(n+1)$-qubit gate $R_0 = \sum^{2^n-1}_{i=0}R_{z}(\alpha_i) \otimes \outer{i}{i}$, such that the right-most 
change in the third step is valid. 
Similarly, according to Lemma~\ref{demultiplexing}, for $Q = \outer{0}{0} \otimes Q_{0} + \outer{1}{1} \otimes Q_{1} $ gate, there exist $2^n$ 1-qubit gates $R_z(\beta_0)$, $R_z(\beta_1)$, $\cdots$, and $R_z(\beta_{2^n-1})$, two $n$-qubit gates $M$ and $N$, and one $(n+1)$-qubit gate $R_2 = \sum^{2^n-1}_{i=0}R_{z}(\beta_i) \otimes \outer{i}{i}$, such that left-most change in the third step is valid.

We have arrived at a circuit that uses 
3 multiplexed $(n+1)$-qubit $R_z(\theta)$ gates,
4 $n$-qubit gates,
2 $H$ gates, and
4 $S$ gates
(because the $S^\dag$ gate can be written $S^\dag = SSS$).

We can now apply Lemma~\ref{rzcount} to 
the 3 multiplexed $(n+1)$-qubit $R_z(\theta)$ gates, and
we can apply the induction hypothesis to 
4 $n$-qubit gates.
The result is an implementation of $U$ built from 
$\{\C{X}, H, S, R_z(\theta)\}$.
\end{proof}

An $n$-qubit unitary has $2^n \times 2^n = (2^n)^2 = (2^2)^n = 4^n$ entries.
One can show easily that the above proof of Lemma~\ref{lem:clifford-plus-rx-is-universal} constructs a circuit with $O(4^n)$ gates.

\section{\texorpdfstring{R$_z(\theta)$}{Rz(theta)} Can be Approximated (Lemma~\ref{lem:approximation-of-rz})}
\label{sec:rz-can-be-approximated}

Our goal is to replace each $R_z(\theta)$ gate by a circuit built from $\{H,S,G_1\}$ that is as close to $R_z(\theta)$ as we like.    
Our strategy for doing that relies on an additional gate $G_2$ that we define using Clifford gates and $G_1$:
\(
G_2 &=& 
(H S^2)\ G_1\ (S^2 H)
\)
Our proof of Lemma~\ref{lem:approximation-of-rz} uses $\{G_1,G_2\}$ as a convenient gate set that enables a three-step proof:
\begin{enumerate}
\item We express $R_z(\theta)$ in terms of
real powers of $G_1$ and $G_2$ (Section~\ref{sec:step-1-real-powers-and-no-approximation}).
\item We replace those real powers by positive-integer powers
(Section~\ref{sec:step-2-integer-powers-and-use-of-approximation}).
\item We use those positive-integer powers to complete the proof of Lemma~\ref{lem:approximation-of-rz} (Section~\ref{sec:step-3-proof-of-approximation-of-rz}).
\end{enumerate}
In the first step, we use that  $\{G_1,G_2\}$ supports generalized Euler decomposition (Lemma~\ref{lem:generalized-euler-decomposition}), which enables us to express $R_z(\theta)$ as a product of real powers of $G_1$ and $G_2$.
In the second step, we use that $G_1,G_2$ represent rotations by irrational angles, so positive-integer powers can approximate arbitrary real powers (Lemma~\ref{lem:approximate-g-to-the-m-when-is-a-rotation-by-an-irr-multiple-of-pi}). 
In the third step, we put these ideas together to show that 
the gate set $\{G_1,G_2\}$, hence the gate set $\{H,S,G_1\}$, can approximate any $R_z(\theta)$ gate.

\subsection{Step 1: Real Powers and No Approximation}
\label{sec:step-1-real-powers-and-no-approximation}

We will represent the unitary matrices $G_1, G_2$ as exponentiations of Hermitian matrices.  The constants and Hermitian matrices that we need are the following.
\(
&&
p=\sqrt{\frac{10-4\sqrt2}{17}}
\;\;\;\;\;\;\;\;
q=\sqrt{\frac{7+4\sqrt2}{17}}
\;\;\;\;\;\;\;\;
\;\;\;\;\;\;\;\;
A_1=-pX-qZ
\;\;\;\;\;\;\;\;
A_2=-qX+pZ
\\
&&
\lambda_0
=
\frac{1}{\pi}
\arccos\!\left(
\frac{1}{2}
\sqrt{1 + 1/\sqrt{2}}
\right)
\)
Now we turn to the first step of the proof.

\begin{lemma}
\label{lem:properties-of-a1-a2-lambda}
The matrices $A_1$ and $A_2$ are Hermitian, and
$A_1^2=A_2^2=I$ and
$A_1A_2=-A_2A_1$ and
$G_i=e^{i\lambda_0\pi A_i}$.
Moreover, $\lambda_0$ is irrational.
\end{lemma}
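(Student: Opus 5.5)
The plan is to prove each claim by direct computation, using that $A_1,A_2$ are real linear combinations of Pauli matrices. Irrationality is then handled by an algebraic-integer argument.

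\textbf{Algebraic properties.} $A_1$ and $A_2$ are real combinations of the Hermitian matrices $X$ and $Z$, so they are Hermitian. Using $X^2=Z^2=I$ and $XZ=-ZX$, we get $A_1^2=(p^2+q^2)I$. From the definitions, $p^2+q^2=\frac{10-4\sqrt2+7+4\sqrt2}{17}=1$, so $A_1^2=I$, and $A_2^2=I$ in the same way. Expanding the products, the $pq\,I$ terms cancel, giving $A_1A_2=-(p^2+q^2)XZ=-XZ$ and $A_2A_1=XZ$. Hence $A_1A_2=-A_2A_1$.

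\textbf{The exponential form of $G_1$.} Since $A_1^2=I$, we have $e^{i\lambda_0\pi A_1}=\cos(\lambda_0\pi)I+i\sin(\lambda_0\pi)A_1$, so it suffices to bring $G_1$ into this form.
\begin{itemize}
\item Write $T=e^{i\pi/8}R_z(\pi/4)$. Then $THTHT=e^{3i\pi/8}\,R_z(\pi/4)\,HR_z(\pi/4)H\,R_z(\pi/4)$, and by Equation~(\ref{eq:rx-in-terms-of-rz}) this gives $G_1=R_z(\pi/4)R_x(\pi/4)R_z(\pi/4)$.
\item Expand the product of the three rotations $\cos(\cdot)I-i\sin(\cdot)\sigma$ in the Pauli basis. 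The identity coefficient is $\cos(\pi/8)\cos(\pi/4)$. Since $\cos(\pi/8)=\sqrt{(1+1/\sqrt2)/2}$, this equals $\tfrac12\sqrt{1+1/\sqrt2}=\cos(\lambda_0\pi)$.
\item The $Y$ coefficient vanishes, because the two $R_z$ factors are symmetric around $R_x$ and their $ZX$ and $XZ$ cross terms cancel.
\item The $X$ and $Z$ coefficients are $-\sin(\pi/8)$ and $-\cos(\pi/8)\sin(\pi/4)$ times $i$. Their squares sum to $1-\cos^2(\lambda_0\pi)=\sin^2(\lambda_0\pi)$.
\item Dividing these coefficients by $\sin(\lambda_0\pi)$ should give exactly $p$ and $q$. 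I will check this by squaring: $\sin^2(\pi/8)/\sin^2(\lambda_0\pi)=(10-4\sqrt2)/17$, after rationalizing $1-\tfrac14(1+1/\sqrt2)$.
\end{itemize}
This yields $G_1=e^{i\lambda_0\pi A_1}$.

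\textbf{The exponential form of $G_2$.} $G_2$ is a conjugate of $G_1$ by $HS^2=HZ$, with inverse $ZH$. Conjugation commutes with the exponential, so $G_2=e^{i\lambda_0\pi (HZ)A_1(ZH)}$. Next, $(HZ)X(ZH)=-HXH=-Z$ and $(HZ)Z(ZH)=HZH=X$. Therefore $(HZ)A_1(ZH)=pZ-qX=A_2$.

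\textbf{Irrationality of $\lambda_0$.} Suppose $\lambda_0$ were rational. Then $e^{2i\lambda_0\pi}$ would be a root of unity, so $2\cos(2\lambda_0\pi)$ would be an algebraic integer. With $c=\cos(\lambda_0\pi)$, we have $2\cos(2\lambda_0\pi)=2(2c^2-1)=4c^2-2=-1+1/\sqrt2$. This number $x$ satisfies $(x+1)^2=\tfrac12$, so its minimal polynomial is $x^2+2x+\tfrac12$. That polynomial is not in $\mathbb Z[x]$, so $x$ is not an algebraic integer, which is a contradiction.

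\textbf{Main obstacle.} I expect the hardest step to be the sign and ratio bookkeeping in the Pauli expansion of $R_z R_x R_z$. The $X$ and $Z$ coefficients must come out exactly proportional to $-p$ and $-q$, with the correct signs, rather than merely up to a sign flip. I would handle this by checking squared ratios together with signs, rather than by computing $\arccos$ directly.
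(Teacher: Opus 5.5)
Your proposal is correct and follows the paper's route. The paper proves the matrix identities by direct calculation, which it only asserts as ``straightforward matrix calculation''; your expansion $G_1=R_z(\pi/4)R_x(\pi/4)R_z(\pi/4)$ and your conjugation by $HZ$ giving $A_2$ both check out, and the signs are right because $p,q,\sin(\pi/8),\sin(\lambda_0\pi)$ are all positive. It proves irrationality via algebraic integers, and your use of $2\cos(2\lambda_0\pi)=-1+1/\sqrt2$ is just the paper's step $x^2-1=1/\sqrt2$ rewritten with the double-angle formula.
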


\begin{proof}
The proofs of the matrix identities are by straightforward matrix calculation.

For the proof that $\lambda_0$ is irrational, 
suppose, for contradiction, that $\lambda_0$ is rational.
Then $e^{i\lambda_0\pi}$ is a root of unity, so it is an algebraic integer.
Therefore,
$e^{-i\lambda_0\pi}$ is also an algebraic integer.
Let $x = e^{i\lambda_0\pi} + e^{-i\lambda_0\pi}$.
Since algebraic integers are closed under addition, we get:
\[
x
\;\;=\;\;
e^{i\lambda_0\pi}+e^{-i\lambda_0\pi}
\;\;=\;\;
2\cos(\lambda_0\pi)
\;\;=\;\;
\sqrt{1+1/\sqrt{2}}
\]
is also an algebraic integer.
From the above, we derive that
\(
x^2-1 &=& \frac{1}{\sqrt{2}}
\)
and since algebraic integers are closed under multiplication and subtraction,
we get
that $\frac{1}{\sqrt{2}}$ is an algebraic integer.
However, the primitive integer polynomial for $\frac{1}{\sqrt{2}}$ is
$(2X^2-1)$,
which is not monic, so $\frac{1}{\sqrt{2}}$ is not an algebraic integer.
This is a contradiction so $\lambda_0$ is irrational.
\end{proof}

For the remainder of the proof, 
we will not need the explicit formulas for $p, q, A_1, A_2, \lambda_0$, 
and instead we will rely on the properties that are expressed by
Lemma~\ref{lem:properties-of-a1-a2-lambda}.

We can define $R_z(\theta)$ in terms of real powers of $G_1$ and $G_2$, as expressed by Lemma~\ref{lem:from-rz-to-g1-g2}, with no need for approximation.

\begin{lemma}
\label{lem:from-rz-to-g1-g2}
For any $\theta$, there exist real numbers $a, b, c$ such that
$R_z(\theta) = G_1^a\ G_2^b\ G_1^c$.
\end{lemma}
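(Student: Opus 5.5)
The plan is to reduce the statement to the $ZXZ$ Euler decomposition (Lemma~\ref{lem:zxz-decomposition}) by conjugating with a suitable unitary. We will use only the properties listed in Lemma~\ref{lem:properties-of-a1-a2-lambda}, together with the fact that $A_1$ and $A_2$ are real combinations of $X$ and $Z$ with $p^2+q^2=1$. Recall that $G_i^a = e^{ia\lambda_0\pi A_i}$ by the definition of real powers. So it suffices to find real $\alpha,\beta,\gamma$ with
\[
R_z(\theta) = e^{-i\alpha A_1/2}\, e^{-i\beta A_2/2}\, e^{-i\gamma A_1/2}.
\]
We then set $a = -\alpha/(2\lambda_0\pi)$, $b=-\beta/(2\lambda_0\pi)$ and $c=-\gamma/(2\lambda_0\pi)$. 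This division is legitimate because $\lambda_0\neq 0$; indeed $\lambda_0$ is irrational.

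\emph{Step 1: find a unitary $W$ with $WA_1W^\dagger = Z$ and $WA_2W^\dagger = X$.} Write $A_1 = n_1\cdot(X,Y,Z)$ and $A_2 = n_2\cdot(X,Y,Z)$ with $n_1=(-p,0,-q)$ and $n_2=(-q,0,p)$. These are orthonormal vectors in the $x$–$z$ plane, since $p^2+q^2=1$. A direct check gives $n_1\times n_2 = (0,1,0)$, which has the same orientation as $\hat z\times\hat x=\hat y$. Hence a single rotation about the $y$-axis carries $n_1\mapsto \hat z$ and $n_2\mapsto\hat x$. So $W = R_y(\varphi)$ works for the appropriate angle $\varphi$, determined by $\cos$ and $\sin$ values built from $p$ and $q$. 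The conjugation identities can then be verified by a short matrix calculation using $R_y(\varphi)\,\sigma\,R_y(\varphi)^\dagger$ for $\sigma\in\{X,Z\}$.

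As a sanity check, $Z = -qA_1 + pA_2$, so $WZW^\dagger = -qZ+pX$. This is consistent with the fact that $R_z(\theta)$ conjugates to a rotation about a unit axis in the $x$–$z$ plane.

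\emph{Step 2: apply Euler decomposition.} The matrix $V = W R_z(\theta) W^\dagger$ is unitary with $\det V = \det R_z(\theta)=1$. By Lemma~\ref{lem:zxz-decomposition} (in the same form used in the base case of Lemma~\ref{lem:clifford-plus-rx-is-universal}), there are reals $\alpha,\beta,\gamma$ with $V = R_z(\alpha)R_x(\beta)R_z(\gamma)$. Conjugating back by $W^\dagger$, we use that $W^\dagger e^{-i t Z/2} W = e^{-itA_1/2}$ and $W^\dagger e^{-itX/2}W = e^{-itA_2/2}$. These hold because conjugation commutes with the matrix exponential. Together they yield the displayed identity, and hence $R_z(\theta) = G_1^aG_2^bG_1^c$.

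\emph{Main obstacle.} The only delicate point is Step 1, specifically the orientation. A conjugation sending $A_1\mapsto Z$ and $A_2\mapsto X$ exists as a unitary only if the triple $(A_1,A_2,-iA_1A_2)$ has the same handedness as $(Z,X,Y)$. Otherwise one would be forced into an antiunitary map, or would have to replace $A_2$ by $-A_2$, which just negates $b$. The cross-product computation above handles this. If it had come out negative, I would instead map $A_2\mapsto -X$ and absorb the sign into $\beta$.

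A more abstract alternative avoids the explicit $\varphi$. Since $A_1,A_2$ are anticommuting Hermitian involutions, $A_3=-iA_1A_2$ is Hermitian, squares to $I$, and anticommutes with both. Then $(A_1,A_2,A_3)$ satisfy the Pauli relations, and any such representation of the Pauli algebra on $\mathbb C^2$ is unitarily equivalent to $(Z,X,\pm Y)$. The sign issue is the same as the one just described.
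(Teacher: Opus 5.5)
Your proof is correct and is essentially the paper's. The paper also conjugates $A_1,A_2$ to $Z,X$, applies the $ZXZ$ decomposition, conjugates back (this is Lemma~\ref{lem:generalized-euler-decomposition}), and divides by $\lambda_0\pi$; the only difference is that it builds the conjugator abstractly, from a $+1$ eigenvector $v_+$ of $A_1$ and $v_-=A_2v_+$, using only the relations in Lemma~\ref{lem:properties-of-a1-a2-lambda}, rather than as an explicit $y$-rotation from the formulas for $p,q$.

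That construction also shows your ``main obstacle'' is illusory. Since $-iA_1A_2=(n_1\times n_2)\cdot(X,Y,Z)$ must map to $-iZX=Y$, the required unitary exists for any anticommuting Hermitian involutions, and no sign correction is ever needed; your cross-product check only confirms that the rotation may be taken about the $y$-axis.
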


\begin{proof}
We have that $R_z(\theta)$ is a unitary matrix and $\det(R_z(\theta)) = 1$, and 
from Lemma~\ref{lem:properties-of-a1-a2-lambda}, we have $A_1^2=A_2^2=I$ and $A_1A_2=-A_2A_1$.
From those properties and Lemma~\ref{lem:generalized-euler-decomposition}, there exist real numbers
$x,y,z$ such that
\[
R_z(\theta)
=
e^{ixA_1}e^{iyA_2}e^{izA_1}.
\]
We can write:
\[
G_1^a = e^{ia\lambda_0\pi A_1},
\qquad
G_2^b = e^{ib\lambda_0\pi A_2}.
\]
Thus, choosing
\[
a=\frac{x}{\lambda_0\pi},
\qquad
b=\frac{y}{\lambda_0\pi},
\qquad
c=\frac{z}{\lambda_0\pi},
\]
we obtain
\[
R_z(\theta) 
=
G_1^a\ G_2^b\ G_1^c.
\]
\end{proof}

\subsection{Step 2: Positive-Integers Powers and Use of Approximation}
\label{sec:step-2-integer-powers-and-use-of-approximation}

If in Lemma~\ref{lem:from-rz-to-g1-g2}, the real numbers $a,b,c$ are positive integers, we are done:
$G_1^a\ G_2^b\ G_1^c$ is a circuit built from $\{G_1,G_2\}$, 
up to a global phase.
Otherwise, the second step is to find positive integers $m,n,p$ that we can use 
instead of the real numbers $a,b,c$ in
Lemma~\ref{lem:from-rz-to-g1-g2} and 
get us as close as we like to $R_z(\theta)$.
The idea is that irrational rotations have positive-integer powers that 
are dense among all rotations about the same axis.
The key lemma for this approximation is 
Lemma~\ref{%
lem:approximate-g-to-the-m-when-is-a-rotation-by-an-irr-multiple-of-pi}.
Intuitively, 
Lemma~\ref{%
lem:approximate-g-to-the-m-when-is-a-rotation-by-an-irr-multiple-of-pi}
talks about approximation based on a 1-qubit unitary $G$ that is a rotation by 
an irrational multiple of $\pi$.


\begin{lemma}
\label{lem:approximate-g-to-the-m-when-is-a-rotation-by-an-irr-multiple-of-pi}
For any projective distance measure $d$,
any irrational number $\lambda$,
any $G = e^{i\lambda \pi A}$, where 
$A$ is a 1-qubit Hermitian matrix that satisfies $e^{i 2 \pi A} = I$, 
any $\epsilon > 0$, and any real number $a$, 
there exists an positive integer $m$ such that 
$d(G^a, G^m) < \epsilon$.
\end{lemma}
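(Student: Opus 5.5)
Set $t_0 = a\lambda$, so that $G^a = e^{i t_0 \pi A}$, and note that $G^m = e^{i m\lambda\pi A}$ for every positive integer $m$. The plan is to find a positive integer $m$ with $m\lambda$ very close to $t_0$ modulo $2$, and then use {\bf Continuity} together with the hypothesis $e^{i2\pi A}=I$. First apply {\bf Continuity} to $\epsilon$, $t_0$, and the Hermitian matrix $A$. This gives $\delta>0$ such that $d(e^{i t_0\pi A}, e^{it\pi A})<\epsilon$ whenever $|t_0-t|<\delta$.

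Next, fix the period. Since $e^{i2\pi A}=I$ and $e^{i2k\pi A}=(e^{i2\pi A})^k$ for every integer $k$, we get $e^{i(t+2k)\pi A}=e^{it\pi A}$. So it suffices to find a positive integer $m$ and an integer $k$ with $|t_0-(m\lambda-2k)|<\delta$. Then
\[
G^m \;=\; e^{im\lambda\pi A} \;=\; e^{i(m\lambda-2k)\pi A},
\]
and hence $d(G^a,G^m)=d(e^{it_0\pi A},e^{i(m\lambda-2k)\pi A})<\epsilon$. Only {\bf Continuity} is used here; the projective axioms are not needed for this lemma. This is why the hypothesis $e^{i2\pi A}=I$ is assumed rather than equality up to a global phase.

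The main step is the density claim: for irrational $\lambda$, the set $\{m\lambda/2 \bmod 1 : m\geq 1\}$ is dense in $[0,1)$. Since $\lambda/2$ is also irrational, we can use the pigeonhole (Dirichlet) argument. Choose $N>2/\delta$ and consider the fractional parts of $j\lambda/2$ for $j=1,\dots,N+1$. These lie in $N$ subintervals of length $1/N$, so two indices $j<j'$ land in the same subinterval. Put $r=j'-j\geq 1$. Then $r\lambda/2$ is within $1/N$ of an integer, and it is not equal to an integer because $\lambda$ is irrational. So $\eta := r\lambda - 2\ell$ satisfies $0<|\eta|<2/N<\delta$ for a suitable integer $\ell$.

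The multiples $s\eta$ for $s=1,2,\dots$ step through the reals with spacing $|\eta|$, in the direction of the sign of $\eta$. Therefore some positive integer $s$ makes $s\eta$ land within $|\eta|<\delta$ of a number congruent to $t_0$ modulo $2$; to arrange this, first shift $t_0$ by a suitable multiple of $2$ so that it lies on the side of $0$ that the walk reaches. Then $m=sr$ is a positive integer, and $m\lambda - 2s\ell=s\eta$, which completes the proof.

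The obstacle I expect is writing the density step carefully enough for the Lean formalization: the sign of $\eta$, positivity of $m$, and the reduction modulo $2$ all need care. Mathlib's results on dense orbits of irrational rotations on $\mathbb{R}/\mathbb{Z}$ could replace the hand-made pigeonhole argument.
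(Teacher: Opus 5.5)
Your proof is correct and follows the paper's argument. Continuity at $t_0=a\lambda$ gives $\delta$, density of $\{m\lambda \bmod 2\}$ gives $m$ and $k$ with $t_0$ within $\delta$ of $m\lambda$ up to a multiple of $2$, and $e^{i2\pi A}=I$ identifies the resulting exponential with $G^m$. The only difference is that the paper cites Kronecker's theorem (Hardy--Wright, Theorem 439) for the density step, while you prove it directly by a Dirichlet pigeonhole argument on $\lambda/2$; that argument is valid and makes the proof self-contained.
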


\begin{proof}
We use the {\bf Continuity} property
to find $\delta > 0$ such that for all real numbers $t$, 
if $|a\lambda - t| < \delta$, 
then $d(e^{i a\lambda \pi A},\ e^{i t \pi A}) < \epsilon$.
We have from Kronecker's theorem
\cite[Theorem 439]{HardyWright08} that $\{\,m\lambda\ (\bmod\ 1) \mid m\in\mathbb N\,\}$ is dense in $[0,1)$, which we can scale by a factor of 2 and get that 
$\{\,m\lambda\ (\bmod\ 2) \mid m\in\mathbb N\,\}$ is dense in $[0,2)$.  Therefore, for our chosen $\delta$, there exist a positive integer $m$ and an integer $k$, such that
\(
|a\lambda - (m \lambda + 2k)| &<& \delta
\)
which gives
\(
d(e^{i (a \lambda) \pi A},\ e^{i (m \lambda + 2k) \pi A})
&<&
\epsilon
\)
Notice that
\(
e^{i (a \lambda) \pi A} 
&=&
G^a
\\
e^{i (m \lambda + 2k) \pi A}
&=&
e^{i (m \lambda) \pi A}
\;\;=\;\;
G^m
\)
so we conclude 
$d(G^a, G^m) < \epsilon$.
\end{proof}

\subsection{Step 3: Proof of Lemma~\ref{lem:approximation-of-rz}}
\label{sec:step-3-proof-of-approximation-of-rz}

The third step is to use
Lemmas~\ref{lem:properties-of-a1-a2-lambda}, \ref{lem:from-rz-to-g1-g2}, and \ref{lem:approximate-g-to-the-m-when-is-a-rotation-by-an-irr-multiple-of-pi} in
the proof Lemma~\ref{lem:approximation-of-rz}.
Recall that Lemma~\ref{lem:approximation-of-rz} states that $R_z(\theta)$ can be approximated:
for any projective distance measure $d$,
any real number $\theta$, and $\epsilon > 0$,
there exists a circuit $C$ built from $\{H, S, G_1\}$,
such that $d(R_z(\theta),C) < \epsilon$.

\begin{proof}[Proof of Lemma~\ref{lem:approximation-of-rz}]
For a given $\theta$, 
we use Lemma~\ref{lem:from-rz-to-g1-g2} to find real numbers $a,b,c$ such that
\begin{eqnarray}
R_z(\theta) &=& G_1^a\ G_2^b\ G_1^c
\label{eq:rz-sim-g1-a-g2-b-g1-c}
\end{eqnarray}
From Lemma~\ref{lem:properties-of-a1-a2-lambda},
we have that 
$A_1^2 = A_2^2 = I$ and
$G_i=e^{i\lambda_0\pi A_i}$ and
$\lambda_0$ is irrational.
From $A_1^2 = A_2^2 = I$, we get
\begin{eqnarray}
e^{i 2 \pi A_1} = I 
&&
e^{i 2 \pi A_2} = I
\label{eq:powers-of-a1-and-a2-are-i}
\end{eqnarray}
For a given $\epsilon > 0$, we use {Equation~(\ref{eq:powers-of-a1-and-a2-are-i})} and Lemma~\ref{lem:approximate-g-to-the-m-when-is-a-rotation-by-an-irr-multiple-of-pi}
to find positive integers $m,n,p$ such that:
\(
d(G_1^a, G_1^m) < \epsilon/3 &\;\;\;\;
d(G_2^b, G_2^n) < \epsilon/3 &\;\;\;\;
d(G_1^c, G_1^p) < \epsilon/3
\)
Now we can calculate as follows:
\begin{eqnarray}
d(R_z(\theta),\ G_1^m\ G_2^n\ G_1^p) 
&=& 
d(G_1^a\ G_2^b\ G_1^c,\ G_1^m\ G_2^n\ G_1^p) 
\nonumber
\\
&\leq&
d(G_1^a, G_1^m) + d(G_2^b, G_2^n) + d(G_1^c, G_1^p)
\nonumber
\\
&<&
\epsilon/3 + \epsilon/3 + \epsilon/3
\;\;=\;\; 
\epsilon
\nonumber
\end{eqnarray}
In the first step, we use {Equation~(\ref{eq:rz-sim-g1-a-g2-b-g1-c}).}
In the second step, we use the {\bf Subadditivity} property.

Notice that 
\(
G_2^n &=&
(H S^2)\ G_1 (S^2 H)\  (H S^2)\ G_1 (S^2 H)\ \ldots\ (H S^2)\ G_1 (S^2 H)\ 
\;\;=\;\;
H S^2\ G_1^n\ S^2 H
\\
G_1^m\ G_2^n\ G_1^p &=&
G_1^m\ H S^2\ G_1^n\ S^2 H\ G_1^p
\)
and that $G_1^m\ H S^2\ G_1^n\ S^2 H\ G_1^p$  is built from $\{H,S,G_1\}$.
\end{proof}


%

\section{Conclusion}

A universal gate set can approximate any unitary.
Previous work has used a variety of distance measures to support approximation.  We have axiomatized the notion of projective distance measure and shown that the Hilbert-Schmidt distance satisfies our axioms. In contrast, the operator-norm distance, the Frobenius distance, and the trace distance need to undergo projectivization before they satisfy our axioms.
We used our notion of projective distance measure to give a distance-independent proof of universality of Clifford+T.
Our proof highlights that the only place where approximation is needed is in the implementation of $R_z(\theta)$ gates.
We have formalized our proof of the universality theorem in Lean: here are links to 
\href{https://github.com/AbdullahAlmanei/clifford-universality-lean}
     {the main page} 
and
\href{https://abdullahalmanei.github.io/clifford-universality-lean}
     {the documentation}, 
which includes the blueprint and the dependency graph.

Author contribution statement: all authors contributed significantly.  Specifically, Jens Palsberg wrote a draft of the paper, Keli Huang wrote the first drafts of Section 5 and Appendix D and contributed numerous technical improvements to the writing of the entire paper, and Abdullah Almanei formalized the proof of the universality theorem in Lean.  We used AI to help convince ourselves that the gate $G_1$ is a good choice for our proof, write drafts of some of the cases of the proofs in the appendices, and help discover aspects of the formalization in Lean.

We are supported by the NSF QLCI program through grant number OMA-2016245 and the NSF NQVL program through grant number 2435382.

\appendix

\section{Proof of Lemma~\ref{lem:DTOS-are-traditional-distance-measures}}
\label{app:proof-that-DTOF-are-traditional-distance-measures}

We begin with a couple of lemmas about $\hs$.

\begin{lemma}
$\hs(U_1 U_2, V_1 V_2) = \hs(V_1^\dag U_1, V_2 U_2^\dag)$.
\label{lem:distance-equality-for-products}
\end{lemma}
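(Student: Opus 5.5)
The plan is to reduce the claim to an equality of traces. By definition, $\hs(U,V)$ depends on $U$ and $V$ only through the scalar $|\operatorname{Tr}(U^\dagger V)|$, together with the dimension $2^n$. So it suffices to show that the two trace arguments on the two sides are equal, not merely equal in absolute value.

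First we note that both sides are well defined. All matrices involved are $n$-qubit unitaries, so $V_1^\dagger U_1$ and $V_2 U_2^\dagger$ are again $n$-qubit unitaries, and $\hs$ is applied at the same dimension $2^n$ on both sides.

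Next we compute the two traces. For the left-hand side,
\[
\operatorname{Tr}\big((U_1U_2)^\dagger V_1V_2\big) = \operatorname{Tr}(U_2^\dagger U_1^\dagger V_1 V_2),
\]
using $(AB)^\dagger = B^\dagger A^\dagger$. For the right-hand side,
\[
\operatorname{Tr}\big((V_1^\dagger U_1)^\dagger V_2U_2^\dagger\big) = \operatorname{Tr}(U_1^\dagger V_1 V_2 U_2^\dagger),
\]
using the same rule together with $(V_1^\dagger)^\dagger = V_1$.

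Cyclicity of the trace, $\operatorname{Tr}(AB)=\operatorname{Tr}(BA)$ applied with $A = U_2^\dagger$ and $B = U_1^\dagger V_1 V_2$, shows that the two traces are equal. Hence their absolute values agree, and substituting into the defining formula for $\hs$ gives the lemma.

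There is no real obstacle here. The only care needed is to apply the adjoint-of-a-product rule correctly on the right-hand side and to pick the right cyclic shift.
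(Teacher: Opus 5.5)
Your proof is correct and is essentially the paper's argument. You expand both adjoints with $(AB)^\dagger = B^\dagger A^\dagger$, equate $\operatorname{Tr}(U_2^\dagger U_1^\dagger V_1 V_2)$ and $\operatorname{Tr}(U_1^\dagger V_1 V_2 U_2^\dagger)$ by one cyclic shift of the trace, and substitute into the definition of $\hs$, just as the paper does.
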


\begin{proof}
Notice that
\begin{eqnarray}
Tr((U_1 U_2)^\dag\ (V_1 V_2)) 
\;=\;
Tr(U_2^\dag U_1^\dag V_1 V_2) 
\;=\;
Tr(U_1^\dag V_1 V_2 U_2^\dag)
\;=\;
Tr((V_1^\dag U_1)^\dag V_2 U_2^\dag)
\label{eq:trace-equality-between-products}
\end{eqnarray}
In the first step, we use $(UV)^\dag = V^\dag\ U^\dag$.
In the second step, we use $Tr(UV) = Tr(VU)$.
In the third step, we use $(UV)^\dag = V^\dag\ U^\dag$.

Now we calculate:
\(
\hs(U_1 U_2, V_1 V_2) &=&
\sqrt{ 1 - \left(\frac{ | Tr( (U_1 U_2)^\dag\ (V_1 V_2) ) | }{ 2^n }\right)^2 }
\\ &=&
\sqrt{ 1 - \left(\frac{ | Tr( (V_1^\dag U_1)^\dag\ (V_2 U_2^\dag) ) | }{ 2^n }\right)^2 }
\\ &=&
\hs(V_1^\dag U_1, V_2 U_2^\dag)
\)
In the second step, we use {Equation~(\ref{eq:trace-equality-between-products}).}
\end{proof}

Lemma~\ref{lem:distance-equality-for-products} implies 
the following properties, via picking some of $U_1, U_2, V_1, V_2$ to be $I^{\otimes n}$.
\begin{eqnarray}
\hs(I^{\otimes n}, V_1 V_2) &=& \hs(V_1^\dag, V_2)
\label{lem:distance-from-identity-to-product}
\\
\hs(U_1 U_2, I^{\otimes n}) &=& \hs(U_1, U_2^\dag)
\label{lem:distance-from-product-to-identity}
\\
\hs(U,V) &=& \hs(V^\dag, U^\dag)
\label{lem:distance-of-conjugated-unitaries}
\end{eqnarray}

For approximation of a product of gates, errors add at most linearly.

\begin{lemma}
$\hs(U,V) \leq \hs(U,I^{\otimes n}) + \hs(I^{\otimes n},V)$.
\label{lem:trace-inequality}
\end{lemma}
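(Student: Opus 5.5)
The plan is to read $\hs$ as the sine of the Fubini--Study angle between rays in the Hilbert space of $2^n\times 2^n$ matrices, and then combine the triangle inequality for angles with an elementary inequality for sines. Put $N=2^n$ and use the inner product $\langle A,B\rangle=\operatorname{Tr}(A^\dagger B)$. For a unitary $W$, the matrix $w=W/\sqrt N$ is a unit vector, because $\operatorname{Tr}(W^\dagger W)=N$. Set $u=U/\sqrt N$, $v=V/\sqrt N$ and $e=I^{\otimes n}/\sqrt N$. Define $\theta(x,y)=\arccos|\langle x,y\rangle|\in[0,\pi/2]$. Then $\hs(U,V)=\sin\theta(u,v)$, $\hs(U,I^{\otimes n})=\sin\theta(u,e)$ and $\hs(I^{\otimes n},V)=\sin\theta(e,v)$. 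Write $\theta=\theta(u,v)$, $\theta_1=\theta(u,e)$ and $\theta_2=\theta(e,v)$.

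\emph{Step 1 (angle triangle inequality).} I will show $\theta\le\theta_1+\theta_2$. The quantity $\theta(x,y)$ does not change when $x$ or $y$ is multiplied by a phase. So I replace $u$ by $e^{i\alpha}u$ and $v$ by $e^{i\beta}v$, with the phases chosen so that $\langle u,e\rangle$ and $\langle e,v\rangle$ are real and nonnegative. Now regard the matrix space as the real Euclidean space $\mathbb R^{2N^2}$ with inner product $\operatorname{Re}\langle\cdot,\cdot\rangle$, and let $\phi(x,y)=\arccos\operatorname{Re}\langle x,y\rangle\in[0,\pi]$ be the real angle. For the normalized vectors, $\phi(u,e)=\theta_1$ and $\phi(e,v)=\theta_2$. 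In addition, $|\langle u,v\rangle|\ge\operatorname{Re}\langle u,v\rangle$ and $\arccos$ is decreasing, so $\theta\le\phi(u,v)$. The remaining claim is the spherical triangle inequality $\phi(u,v)\le\phi(u,e)+\phi(e,v)$ on the real unit sphere.

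To prove it, write $u=\cos\theta_1\,e+\sin\theta_1\,u'$ and $v=\cos\theta_2\,e+\sin\theta_2\,v'$, where $u'$ and $v'$ are unit vectors real-orthogonal to $e$. Then
\[
\operatorname{Re}\langle u,v\rangle=\cos\theta_1\cos\theta_2+\sin\theta_1\sin\theta_2\operatorname{Re}\langle u',v'\rangle\ge\cos(\theta_1+\theta_2).
\]
Here $\theta_1+\theta_2\in[0,\pi]$, so applying $\arccos$ gives the inequality. I expect this step to be the main obstacle. The difficulty is not the computation but the bookkeeping: each phase normalization must be justified by the phase invariance of $\theta$, and the edge cases where $\sin\theta_i=0$ must be handled. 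In those cases $u'$ or $v'$ can be taken to be an arbitrary unit vector, and the displayed bound still holds.

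\emph{Step 2 (from angles to sines).} Two cases remain. If $\theta_1+\theta_2\le\pi/2$, then $\sin$ is increasing on $[0,\pi/2]$, so
\[
\sin\theta\le\sin(\theta_1+\theta_2)=\sin\theta_1\cos\theta_2+\cos\theta_1\sin\theta_2\le\sin\theta_1+\sin\theta_2 .
\]
If $\theta_1+\theta_2>\pi/2$, then $\theta_2>\pi/2-\theta_1$, so $\sin\theta_2\ge\cos\theta_1$. This gives
\[
\sin\theta_1+\sin\theta_2\ge\sin\theta_1+\cos\theta_1\ge 1\ge\sin\theta .
\]
The middle inequality holds because both $\sin\theta_1$ and $\cos\theta_1$ lie in $[0,1]$, so $\sin\theta_1+\cos\theta_1\ge\sin^2\theta_1+\cos^2\theta_1=1$. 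In both cases $\hs(U,V)\le\hs(U,I^{\otimes n})+\hs(I^{\otimes n},V)$, which is the statement.
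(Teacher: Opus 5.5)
Your proof is correct. It takes a genuinely different route only in the sense that the paper gives no argument of its own. It simply invokes Wang and Zhang's trace inequality for unitaries,
\[
\sqrt{1-|\operatorname{Tr}(AB)/N|^2}\;\le\;\sqrt{1-|\operatorname{Tr}A/N|^2}+\sqrt{1-|\operatorname{Tr}B/N|^2},
\]
which becomes the statement after substituting $A=U^\dagger$, $B=V$ and using $|\operatorname{Tr}U^\dagger|=|\operatorname{Tr}U|$.

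The citation is short, but it leaves the reader to translate notation and to trust an external source. Your argument is self-contained and elementary: phase normalization, the spherical triangle inequality in $\mathbb R^{2N^2}$, and the case split on whether $\theta_1+\theta_2\le\pi/2$ to pass from angles to sines. Every step checks out, including the degenerate cases $\sin\theta_i=0$.

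Your argument also gives slightly more. Nothing in it uses that the middle point is $I^{\otimes n}$, only that it normalizes to a unit vector. So it directly gives $\hs(U,V)\le\hs(U,W)+\hs(W,V)$ for every unitary $W$. Taking $W=V_1U_2$ and using $|\operatorname{Tr}((U_1U_2)^\dagger V_1U_2)|=|\operatorname{Tr}(U_1^\dagger V_1)|$ and $|\operatorname{Tr}((V_1U_2)^\dagger V_1V_2)|=|\operatorname{Tr}(U_2^\dagger V_2)|$, this yields \textbf{Subadditivity} in one line, without the detour through Lemma~\ref{lem:distance-equality-for-products}.

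An even shorter route to the same general fact exists. Regard $u,v$ as unit vectors in $\mathbb C^{N^2}$. Then $\sqrt{1-|\langle u,v\rangle|^2}=\tfrac12\|\outer{u}{u}-\outer{v}{v}\|_1$ is the trace distance between pure states, so the triangle inequality is inherited from the trace norm.
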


\begin{proof}
\citeN{WangZhang94} state this property as their {Equation~(1)}, though in a different notation.
\end{proof}

Recall that Lemma~\ref{lem:DTOS-are-traditional-distance-measures} states that
$\hs,\trace,\oper,\frob$ are traditional distance measures.

\begin{proof}[Proof of Lemma~\ref{lem:DTOS-are-traditional-distance-measures}]
We consider $\hs,\trace,\oper,\frob$ in turn.

First, we prove that $\hs$ has the four properties of a traditional distance measure.

\paragraph{Reflexivity.}
We calculate:
\(
\hs(U,U) &=& 
\sqrt{ 1 - \left(\frac{ | Tr( U^\dag U ) | }{ 2^n }\right)^2 }
 \;\;=\;\; 
\sqrt{ 1 - \left(\frac{ 2^n }{ 2^n }\right)^2 } 
 \;\;=\;\; 0
\)

\paragraph{Common-Phase Invariance.}
We calculate:
\(
\hs(e^{i\theta}U,e^{i\theta}V) 
&=&
\sqrt{ 1 - \left(\frac{ | Tr( (e^{i\theta} U)^\dagger (e^{i\theta} V) ) | }{ 2^n }\right)^2 }
\\
&=&
\sqrt{ 1 - \left(\frac{ | e^{-i\theta} e^{i\theta} Tr( U^\dagger V ) | }{ 2^n }\right)^2 }
\\
&=&
\sqrt{ 1 - \left(\frac{ | Tr( U^\dagger V ) | }{ 2^n }\right)^2 }
\\
&=&
\hs(U,V)
\)

\paragraph{Subadditivity.}
We calculate:
\(
\hs(U_1 U_2, V_1 V_2) &=&
\hs(V_1^\dag U_1, V_2 U_2^\dag)
\\ &\leq& 
\hs(V_1^\dag U_1, I^{\otimes n}) + \hs(I^{\otimes n}, V_2 U_2^\dag)
\\ &=&
\hs(V_1^\dag, U_1^\dag) + \hs(V_2^\dag, U_2^\dag)
\\ &=&
\hs(U_1,V_1) + \hs(U_2,V_2)
\)
In the first step, we use Lemma~\ref{lem:distance-equality-for-products}.
In the second step, we use Lemma~\ref{lem:trace-inequality}.
In the third step, we use 
{Equations~(\ref{lem:distance-from-identity-to-product})--(\ref{lem:distance-from-product-to-identity}).}
In the fourth step, we use {Equation~(\ref{lem:distance-of-conjugated-unitaries}).}

\paragraph{Continuity.}
Let $\epsilon>0$, let $t_0\in\mathbb R$, and let $A$ be Hermitian.
For Continuity, we have
\(
\hs(e^{it_0\pi A},e^{it\pi A})
&=&
\sqrt{1-
\left(
\frac{|Tr(e^{i(t-t_0)\pi A})|}{2^n}
\right)^2} 
\)
Notice that 
$\lim_{t\to t_0} |Tr(e^{i(t-t_0)\pi A})| = 2^n$.
Hence,
\[
\lim_{t\to t_0}
\hs(e^{it_0\pi A},e^{it\pi A})
=0,
\]
as required.
Thus $\hs$ satisfies {\bf Continuity}.

In summary, $\hs$ is a traditional distance measure.

\bigskip

Second, we prove that $\trace$ has the four properties of a traditional distance measure.

\paragraph{Reflexivity.}
We have
\[
\trace(U,U) = \|U-U\|_{\mathrm 1} = 0
\]

\paragraph{Common-Phase Invariance.}
For every $\theta\in\mathbb R$,
\[
\begin{aligned}
\trace(e^{i\theta}U,e^{i\theta}V)
&=
\|e^{i\theta}U-e^{i\theta}V\|_{\mathrm 1}
\\
&=
\|e^{i\theta}(U-V)\|_{\mathrm 1}
\\
&=
\|U-V\|_{\mathrm 1}
\\
&=
\trace(U,V).
\end{aligned}
\]

\paragraph{Subadditivity.}
For unitary matrices $U_1,U_2,V_1,V_2$, we calculate:
\[
\begin{aligned}
\trace(U_1U_2,V_1V_2)
&=
\|U_1U_2-V_1V_2\|_{\mathrm 1}
\\
&=
\|(U_1-V_1)U_2+V_1(U_2-V_2)\|_{\mathrm 1}
\\
&\leq
\|(U_1-V_1)U_2\|_{\mathrm 1}
+
\|V_1(U_2-V_2)\|_{\mathrm 1}
\\
&=
\|U_1-V_1\|_{\mathrm 1}
+
\|U_2-V_2\|_{\mathrm 1}
\\
&=
\trace(U_1,V_1)
+
\trace(U_2,V_2).
\end{aligned}
\]
In the third step, we use that the trace norm satisfies the triangle inequality.
In the fourth step, we use the invariance of the trace norm under multiplication by unitary matrices.

\paragraph{Continuity.}
Let $\epsilon>0$, let $t_0\in\mathbb R$, and let $A$ be Hermitian.
We have
\[
\begin{aligned}
\trace(e^{it_0\pi A},e^{it\pi A})
&=
\left\|
e^{it_0\pi A}-e^{it\pi A}
\right\|_{\mathrm 1}
\\
&=
\left\|
I^{\otimes n}-e^{i(t-t_0)\pi A}
\right\|_{\mathrm 1}.
\end{aligned}
\]
Since
\[
\lim_{s\to0}e^{is\pi A}=I^{\otimes n},
\]
there exists $\delta>0$ such that
\[
|t_0-t|<\delta
\quad\Longrightarrow\quad
\left\|
I^{\otimes n}-e^{i(t-t_0)\pi A}
\right\|_{\mathrm 1}
<
\epsilon.
\]
Thus $\trace$ satisfies {\bf Continuity}.

In summary, $\trace$ is a traditional distance measure.

\bigskip

Third, we prove that $\oper$ has the four properties of a traditional distance measure.

We prove the four properties of a traditional distance measure.

\paragraph{Reflexivity.}
We have
\[
\oper(U,U) = \|U-U\|_{\mathrm 2} = 0
\]

\paragraph{Common-Phase Invariance.}
For every $\theta\in\mathbb R$,
\[
\oper(e^{i\theta}U,e^{i\theta}V)
=
\|e^{i\theta}U-e^{i\theta}V\|_{2}
=
\|e^{i\theta}(U-V)\|_{2}
=
\|U-V\|_{2}
=
\oper(U,V).
\]

\paragraph{Subadditivity.}
For unitary matrices $U_1,U_2,V_1,V_2$, we calculate:
\[
\begin{aligned}
\oper(U_1U_2,V_1V_2)
&=
\|U_1U_2-V_1V_2\|_{2}
\\
&=
\|(U_1-V_1)U_2+V_1(U_2-V_2)\|_{2}
\\
&\le
\|(U_1-V_1)U_2\|_{2}
+
\|V_1(U_2-V_2)\|_{2}
\\
&\le
\|U_1-V_1\|_{2}\|U_2\|_{2}
+
\|V_1\|_{2}\|U_2-V_2\|_{2}
\\
&=
\|U_1-V_1\|_{2}
+
\|U_2-V_2\|_{2}
\\
&=
\oper(U_1,V_1)
+
\oper(U_2,V_2).
\end{aligned}
\]
In the third step, we use that the operator norm satisfies
the triangle inequality.
In the fourth step, we use the submultiplicativity
of the operator norm.
In the fifth step, we use that every unitary matrix has
operator norm~$1$.

\paragraph{Continuity.}
Let $\epsilon>0$, let $t_0\in\mathbb R$, and let $A$ be Hermitian.
We have
\[
\begin{aligned}
\oper(e^{it_0\pi A},e^{it\pi A})
&=
\left\|
e^{it_0\pi A}-e^{it\pi A}
\right\|_{2}
\\
&=
\left\|
I^{\otimes n}-e^{i(t-t_0)\pi A}
\right\|_{2}.
\end{aligned}
\]
Since
\[
\lim_{s\to0}e^{is\pi A}=I^{\otimes n},
\]
there exists $\delta>0$ such that
\[
|t_0-t|<\delta
\quad\Longrightarrow\quad
\left\|
I^{\otimes n}-e^{i(t-t_0)\pi A}
\right\|_{2}
<
\epsilon.
\]
Thus $\oper$ satisfies {\bf Continuity}.

In summary, $\oper$ is a traditional distance measure.

\bigskip

Fourth, we prove that $\frob$ has the four properties of a traditional distance measure.

We prove the four properties of a traditional distance measure.

\paragraph{Reflexivity.}
We have
\[
\frob(U,U) = \|U-U\|_{\mathrm F} = 0
\]

\paragraph{Common-Phase Invariance.}
For every $\theta\in\mathbb R$,
\[
\frob(e^{i\theta}U,e^{i\theta}V)
\;=\;
\|e^{i\theta}U-e^{i\theta}V\|_{\mathrm F}
\;=\;
\|e^{i\theta}(U-V)\|_{\mathrm F}
\;=\;
\|U-V\|_{\mathrm F}
\;=\;
\frob(U,V).
\]

\paragraph{Subadditivity.}
For unitary matrices $U_1,U_2,V_1,V_2$, we calculate:
\[
\begin{aligned}
\frob(U_1U_2,V_1V_2)
&=
\|U_1U_2-V_1V_2\|_{\mathrm F}
\\
&=
\|(U_1-V_1)U_2+V_1(U_2-V_2)\|_{\mathrm F}
\\
&\leq
\|(U_1-V_1)U_2\|_{\mathrm F}
+
\|V_1(U_2-V_2)\|_{\mathrm F}
\\
&=
\|U_1-V_1\|_{\mathrm F}
+
\|U_2-V_2\|_{\mathrm F}
\\
&=
\frob(U_1,V_1)+\frob(U_2,V_2).
\end{aligned}
\]
The third step, we use that Frobenius norm satisfies the triangle inequality.
In the fourth step, we use the invariance of the Frobenius norm
under multiplication by unitary matrices.

\paragraph{Continuity.}
Let $\epsilon>0$, let $t_0\in\mathbb R$, and let $A$ be Hermitian.
We have
\[
\begin{aligned}
\frob(e^{it_0\pi A},e^{it\pi A})
&=
\left\|
e^{it_0\pi A}-e^{it\pi A}
\right\|_{\mathrm F}
\\
&=
\left\|
I^{\otimes n}-e^{i(t-t_0)\pi A}
\right\|_{\mathrm F}.
\end{aligned}
\]
Since
\[
\lim_{s\to 0}e^{is\pi A}=I^{\otimes n},
\]
there exists $\delta>0$ such that
\[
|t_0-t|<\delta
\quad\Longrightarrow\quad
\left\|
I^{\otimes n}-e^{i(t-t_0)\pi A}
\right\|_{\mathrm F}<\epsilon.
\]
Thus $\frob$ satisfies {\bf Continuity}.

In summary, $\frob$ is a traditional distance measure.
\end{proof}


\section{Proof of Lemma~\ref{lem:projective-minimum-exists}}
\label{app:proof-that-projective-minimum-exists}

Recall that Lemma~\ref{lem:projective-minimum-exists} states the following.
Let $d$ be a traditional distance measure.  For every pair of unitary
matrices $U,V$, there exists $\theta_0\in\mathbb R$ such that
\[
d(U,e^{i\theta_0}V)
=
\inf_{\theta\in\mathbb R} d(U,e^{i\theta}V).
\]

\begin{proof}
Fix unitary matrices $U,V$, and define
\[
f(\theta)=d(U,e^{i\theta}V).
\]
We first show that $f$ is continuous.

Let $\epsilon>0$.  By the {\bf Continuity} property, applied to the real number $0$ 
and the
Hermitian matrix $I^{\otimes n}$, 
there exists
$\delta'>0$ such that,
for every real number $s$,
\[
|s|<\delta'
\quad\Longrightarrow\quad
d(I^{\otimes n},e^{is\pi}I^{\otimes n})<\epsilon.
\]
where we use that $e^{is\pi I^{\otimes n}} = e^{is\pi}I^{\otimes n}$.
Set $\delta=\pi\delta'$.  If $|\theta-\phi|<\delta$, then
\[
\left|\frac{\theta-\phi}{\pi}\right|<\delta',
\]
and hence
\[
d(I^{\otimes n},e^{i(\theta-\phi)}I^{\otimes n})<\epsilon.
\]
We calculate:
\(
f(\theta)
&=&
d(U,e^{i\theta}V)
\;\;=\;\;
d\bigl(UI^{\otimes n},e^{i\phi}V e^{i(\theta-\phi)}I^{\otimes n}\bigr)
\;\;\leq\;\;
d(U,e^{i\phi}V)
+
d(I^{\otimes n},e^{i(\theta-\phi)}I^{\otimes n}) 
\\
&<&
f(\phi)+\epsilon.
\)
In the third step, we use {\bf Subadditivity}.
We can interchange $\theta$ and $\phi$ in the above argument and get:
\[
f(\phi)<f(\theta)+\epsilon.
\]
Therefore,
\[
|f(\theta)-f(\phi)|<\epsilon,
\]
so $f$ is continuous.

Since $f$ is $2\pi$-periodic,
\[
\inf_{\theta\in\mathbb R} f(\theta)
=
\inf_{\theta\in[0,2\pi]} f(\theta).
\]
The interval $[0,2\pi]$ is compact, and $f$ is continuous.
Therefore, by the Extreme Value Theorem, there exists
$\theta_0\in\mathbb R$ such that
\[
f(\theta_0)
=
\inf_{\theta\in\mathbb R} f(\theta).
\]

Therefore, there exists $\theta_0\in\mathbb R$ such that
\[
d(U,e^{i\theta_0}V)
=
\inf_{\theta\in\mathbb R} d(U,e^{i\theta}V).
\]
\end{proof}

\section{Proof of Lemma~\ref{lem:from-traditional-distance-measure-to-projective-distance-measure}}
\label{app:proof-of-from-traditional-distance-measure-to-projective-distance-measure}

Recall that Lemma~\ref{lem:from-traditional-distance-measure-to-projective-distance-measure} states the following.
If $d$ is a traditional distance measure, then $\widehat d$ is a projective
distance measure.

\begin{proof}
We show that $\widehat d$ satisfies the four properties of a
projective distance measure.

\paragraph{Reflexivity.}
\[
\widehat d(U,U)
=
\min_{\theta\in\mathbb R}
d(U,e^{i\theta}U)
\leq
d(U,U)
=
0.
\]
In the last step, we use the {\bf Reflexivity} property.
Since $\widehat d$ is nonnegative, we conclude that
$\widehat d(U,V)=0$.

\paragraph{Projective Invariance.}
Let $\alpha,\beta\in\mathbb R$.  Then
\[
\begin{aligned}
\widehat d(e^{i\alpha}U,e^{i\beta}V)
&=
\min_{\theta\in\mathbb R}
d(e^{i\alpha}U,e^{i(\theta+\beta)}V)
\\
&=
\min_{\theta\in\mathbb R}
d(e^{-i\alpha}e^{i\alpha}U,
  e^{-i\alpha}e^{i(\theta+\beta)}V)
\\
&=
\min_{\theta\in\mathbb R}
d(U,e^{i(\theta+\beta-\alpha)}V)
\\
&=
\min_{\phi\in\mathbb R}
d(U,e^{i\phi}V)
\\
&=
\widehat d(U,V),
\end{aligned}
\]
where $\phi=\theta+\beta-\alpha$.
In the second step, we use {\bf Common-Phase Invariance}.

\paragraph{Subadditivity.}
By Lemma~\ref{lem:projective-minimum-exists}, there exist
$\alpha,\beta\in\mathbb R$ such that
\(
\widehat d(U_1,V_1)
&=&
d(U_1,e^{i\alpha}V_1)
\\
\widehat d(U_2,V_2)
&=&
d(U_2,e^{i\beta}V_2).
\)
Therefore,
\[
\begin{aligned}
\widehat d(U_1U_2,V_1V_2)
&\leq
d(U_1U_2,e^{i(\alpha+\beta)}V_1V_2)
\\
&=
d\bigl(
U_1U_2,
(e^{i\alpha}V_1)(e^{i\beta}V_2)
\bigr)
\\
&\leq
d(U_1,e^{i\alpha}V_1)
+
d(U_2,e^{i\beta}V_2)
\\
&=
\widehat d(U_1,V_1)
+
\widehat d(U_2,V_2).
\end{aligned}
\]

\paragraph{Continuity.}
Let $\epsilon>0$, let $t_0\in\mathbb R$, and let $A$ be Hermitian.
By {\bf Continuity}, there exists $\delta>0$ such that, for every
$t\in\mathbb R$,
\[
|t_0-t|<\delta
\quad\Longrightarrow\quad
d(e^{it_0\pi A},e^{it\pi A})<\epsilon.
\]
For such a $t$, choosing $\theta=0$ in the definition of
$\widehat d$ gives
\[
\widehat d(e^{it_0\pi A},e^{it\pi A})
\leq
d(e^{it_0\pi A},e^{(i\cdot 0)} e^{it\pi A})
=
d(e^{it_0\pi A},e^{it\pi A})
<
\epsilon.
\]
Thus $\widehat d$ satisfies {\bf Continuity}.

In summary, $\widehat d$ is a projective distance measure.
\end{proof}

Notice that the proof of Theorem~\ref{lem:from-traditional-distance-measure-to-projective-distance-measure} uses Lemma~\ref{lem:projective-minimum-exists} in a single place: in the choice of minimizing phases in the proof of {\bf Subadditivity}.


\section{Decomposition Lemmas}
\label{sec:decomposition-lemmas}

\subsection{Decomposition of Single-Qubit Gates}

\begin{lemma}[Euler decomposition]
\label{lem:zxz-decomposition}
For every $U\in SU(2)$, there exist
$\alpha,\beta,\gamma\in\mathbb R$ such that
\[
U
=
R_z(\alpha)\,
R_x(\beta)\,
R_z(\gamma).
\]
\end{lemma}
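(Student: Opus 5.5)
We prove the Euler decomposition by writing a general element of $SU(2)$ in its standard parametrized form and matching it against the product $R_z(\alpha)R_x(\beta)R_z(\gamma)$. First, every $U\in SU(2)$ can be written as
\[
U=\left(\begin{array}{cc} a & -\overline{b}\\ b & \overline{a}\end{array}\right),
\qquad |a|^2+|b|^2=1 .
\]
To see this, note that unitarity makes the columns orthonormal, so the second column is a unit vector orthogonal to $(a,b)^T$. Hence it equals $\mu(-\overline b,\overline a)^T$ with $|\mu|=1$, and the condition $\det U=1$ forces $\mu=1$.

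Next we compute the target product. A direct multiplication gives
\[
R_z(\alpha)R_x(\beta)R_z(\gamma)=
\left(\begin{array}{cc}
e^{-i(\alpha+\gamma)/2}\cos(\beta/2) & -i\,e^{-i(\alpha-\gamma)/2}\sin(\beta/2)\\
-i\,e^{i(\alpha-\gamma)/2}\sin(\beta/2) & e^{i(\alpha+\gamma)/2}\cos(\beta/2)
\end{array}\right).
\]
This matrix already has the form above, with $a=e^{-i(\alpha+\gamma)/2}\cos(\beta/2)$ and $b=-i\,e^{i(\alpha-\gamma)/2}\sin(\beta/2)$. So it suffices to match the two entries $a$ and $b$.

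The parameters are chosen as follows.

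\textbf{Choice of $\beta$.} Since $|a|^2+|b|^2=1$, there is $\beta\in[0,\pi]$ with $|a|=\cos(\beta/2)$ and $|b|=\sin(\beta/2)$. Write $a=|a|e^{i\phi_a}$ and $-\,i\cdot i\,b=b=|b|e^{i\phi_b}$, so that $ib=|b|e^{i(\phi_b+\pi/2)}$.

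\textbf{Phase equations.} Matching the entries requires
\[
-(\alpha+\gamma)/2=\phi_a,\qquad (\alpha-\gamma)/2=\phi_b+\pi/2 \pmod{2\pi}.
\]
This linear system always has a real solution $\alpha,\gamma$, because its coefficient matrix is invertible.

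\textbf{Degenerate cases.} When $|a|=0$ or $|b|=0$, the phase $\phi_a$ or $\phi_b$ is arbitrary, so we simply set it to $0$. The remaining equation then still has a solution.

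Once both $a$ and $b$ agree, the whole matrix agrees, because each column is determined by $(a,b)$.

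The only step needing care is the phase bookkeeping. The two phase conditions are on half-angles, so one should check that the resulting $\alpha$ and $\gamma$ reproduce $a$ and $b$ exactly, and not merely up to a sign. Solving the system linearly over $\mathbb R$, rather than modulo $2\pi$, removes this concern. We take $\phi_a$ and $\phi_b$ as any fixed real representatives and solve exactly:
\[
\alpha=\phi_b+\pi/2-\phi_a,\qquad \gamma=-\phi_a-\phi_b-\pi/2.
\]
Then the exponentials match exactly, and there is no sign ambiguity.
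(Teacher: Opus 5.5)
Your proof is correct, and it takes a different route from the paper's. The paper computes nothing directly. It quotes the $Z$--$Y$ Euler decomposition (Nielsen--Chuang, Theorem~4.1) and rewrites $R_y(\beta)=S\,R_x(\beta)\,S^\dagger$ using the identities $R_x=HR_zH$ and $R_y=SHR_zHS^\dagger$. It then absorbs $S=e^{i\pi/4}R_z(\pi/2)$ and $S^\dagger=e^{-i\pi/4}R_z(-\pi/2)$ into the outer rotations, so the two phases cancel and the outer angles shift by $\pm\pi/2$.

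You instead parametrize $SU(2)$ explicitly and solve for the angles. Your product formula is right, and your choice $\alpha=\phi_b+\pi/2-\phi_a$, $\gamma=-\phi_a-\phi_b-\pi/2$ gives exactly $e^{-i(\alpha+\gamma)/2}=e^{i\phi_a}$ and $-i\,e^{i(\alpha-\gamma)/2}=e^{i\phi_b}$. The degenerate cases $|a|=0$ or $|b|=0$ are also handled correctly.

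The paper's argument is shorter and reuses identities it already has; the same conjugation idea reappears in its generalized Euler decomposition. However, it depends on an external theorem that, as stated in Nielsen--Chuang, is for $U(2)$ and carries a global phase $e^{i\delta}$. On $SU(2)$ that phase is $\pm1$ and must be absorbed via $R_z(2\pi)=-I$, a step the paper leaves implicit. Your argument is self-contained, gives explicit angles, and never meets that phase, at the cost of some bookkeeping.
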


\begin{proof}
By the Euler decomposition for $SU(2)$
from~\citeN{NielsenChuang00}, Theorem 4.1, there exist
$\alpha,\beta,\gamma\in\mathbb R$ such that
\(
U &=& R_z(\alpha)\ R_y(\beta)\ R_z(\gamma).
\)
By {Equations~\eqref{eq:rx-in-terms-of-rz} and
\eqref{eq:ry-in-terms-of-rz}},
\(
R_y(\beta) &=& S\ R_x(\beta)\ S^\dagger.
\)
Moreover,
\[
S=e^{i\pi/4}\ R_z(\pi/2)
\quad\text{and}\quad
S^\dagger=e^{-i\pi/4}\ R_z(-\pi/2).
\]
Therefore, the two global phases cancel, and
\(
U
&=& R_z(\alpha)\ S\ R_x(\beta)\ S^\dagger\ R_z(\gamma)
\;\;=\;\;
  R_z(\alpha+\pi/2)\
  R_x(\beta)\
  R_z(\gamma-\pi/2).
\)
This is the required decomposition.
\end{proof}

\begin{lemma}[Generalized Euler decomposition]
\label{lem:generalized-euler-decomposition}
Let $A_1,A_2\in\mathbb{C}^{2\times 2}$ be Hermitian matrices such that $A_1^2=A_2^2=I$ and $A_1A_2=-A_2A_1$.
For every $U\in SU(2)$, there exist $a,b,c\in\mathbb{R}$ such that
\[
U=e^{iaA_1}e^{ibA_2}e^{icA_1}.
\]
\end{lemma}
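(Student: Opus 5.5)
The plan is to reduce the statement to the ordinary Euler decomposition (Lemma~\ref{lem:zxz-decomposition}) by a change of basis. We will find a unitary $W$ with $W A_1 W^\dagger = Z$ and $W A_2 W^\dagger = X$. Given such a $W$, consider $U' = W U W^\dagger$. It is again in $SU(2)$, so Lemma~\ref{lem:zxz-decomposition} gives $\alpha,\beta,\gamma$ with $U' = R_z(\alpha) R_x(\beta) R_z(\gamma)$. We have $R_z(\alpha)=e^{-i\alpha Z/2} = W e^{-i\alpha A_1/2} W^\dagger$, and similarly $R_x(\beta) = W e^{-i\beta A_2/2}W^\dagger$. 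Conjugating back therefore yields $U = e^{iaA_1}e^{ibA_2}e^{icA_1}$ with $a=-\alpha/2$, $b=-\beta/2$, $c=-\gamma/2$.

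To construct $W$, we first deal with $A_1$. Since $A_1$ is Hermitian with $A_1^2=I$, its eigenvalues lie in $\{\pm1\}$. They cannot both be equal, since then $A_1=\pm I$ would commute with $A_2$. Combined with anticommutation this would give $2A_1A_2=0$, hence $A_2=0$, contradicting $A_2^2=I$. So $A_1$ has eigenvalues $+1$ and $-1$, and there is a unitary $W_0$ with $W_0A_1W_0^\dagger = Z$.

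Next we turn to $B = W_0A_2W_0^\dagger$. It is Hermitian, satisfies $B^2=I$, and anticommutes with $Z$. Anticommutation with $Z$ forces the diagonal entries of $B$ to vanish. Hermiticity then gives $B = \begin{pmatrix}0 & \bar w\\ w & 0\end{pmatrix}$, and $B^2=I$ gives $|w|=1$, so $w=e^{i\phi}$. Conjugating by the diagonal unitary $D=\mathrm{diag}(1,e^{-i\phi})$ fixes $Z$ and sends $B$ to $X$. Setting $W = DW_0$ then finishes the construction.

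I expect this normalization of $A_2$ to be the main (though modest) obstacle: one must check that the phase conjugation by $D$ really turns $B$ into $X$ rather than its conjugate. A direct $2\times2$ computation will settle it. The rest is routine, using that $e^{W M W^\dagger} = W e^{M} W^\dagger$ for unitary $W$ and that $\det$ is preserved under conjugation.
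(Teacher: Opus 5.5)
Your proposal is correct and follows the paper's proof: both rule out $A_1=\pm I$ using the anticommutation relation, conjugate by a unitary that sends $A_1\mapsto Z$ and $A_2\mapsto X$, and pull back the $ZXZ$ Euler decomposition with $a=-\alpha/2$, $b=-\beta/2$, $c=-\gamma/2$. The only difference is cosmetic: the paper builds the basis directly as $v_+$ and $v_-=A_2v_+$, which makes your phase correction by $D$ unnecessary, and your $W^\dagger$ is in fact exactly the paper's $V$ when $v_+$ is taken to be the first column of $W_0^\dagger$.
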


\begin{proof}
From $A_1^2=I$, every eigenvalue $\lambda$ of $A_1$ satisfies
$\lambda^2=1$, and hence $\lambda\in\{1,-1\}$.
Since $A_1A_2=-A_2A_1$ and $A_2$ is invertible, we have
$A_1\neq\pm I$. Hence the eigenvalues of $A_1$ are $\{1,-1\}$.

Choose a normalized eigenvector $v_+$ of $A_1$ associated with
eigenvalue $1$, and let
\[
v_-=A_2v_+.
\]
Since $A_2$ is Hermitian and satisfies $A_2^2=I$, it is unitary,
so $v_-$ is normalized. Moreover,
\[
A_1v_-
=
A_1A_2v_+
=
-A_2A_1v_+
=
-A_2v_+
=
-v_-.
\]
Thus $v_-$ is an eigenvector of $A_1$ associated with eigenvalue
$-1$, and $\{v_+,v_-\}$ is an orthonormal basis.

Let $V$ be the matrix whose columns are $v_+$ and $v_-$.
Since $\{v_+,v_-\}$ is an orthonormal basis, $V$ is unitary.
Furthermore,
\[
V^\dagger A_1V=Z,
\qquad
V^\dagger A_2V=X.
\]

Since $U\in SU(2)$, we have $V^\dagger UV\in SU(2)$.
Thus, by Lemma~\ref{lem:zxz-decomposition}, there exist
$\alpha,\beta,\gamma\in\mathbb R$ such that
\[
V^\dagger UV
=
R_z(\alpha)R_x(\beta)R_z(\gamma)
=
e^{-i\alpha Z/2}
e^{-i\beta X/2}
e^{-i\gamma Z/2}.
\]
Conjugating by $V$ and using
$Ve^MV^\dagger=e^{VMV^\dagger}$ yields
\[
U
=
e^{-i\alpha A_1/2}
e^{-i\beta A_2/2}
e^{-i\gamma A_1/2}.
\]
The result follows by taking
$a=-\alpha/2$, $b=-\beta/2$, and $c=-\gamma/2$.
\end{proof}

\subsection{Decomposition of Multi-Qubit Gates}

\begin{lemma}[Cosine-Sine decomposition]
For one n-qubit gate $U$, there exist $2^{n-1}$ 1-qubit gates $R_y(\theta_0)$, $R_y(\theta_1)$, $\cdots$, and $R_y(\theta_{2^{n-1}-1})$, four $(n-1)$-qubit gates $P_0$, $P_1$, $Q_0$, and $Q_1$, and three n-qubit gates $P = \outer{0}{0} \otimes P_{0} + \outer{1}{1} \otimes P_{1}$, $R = \sum^{2^{n-1}-1}_{i=0}R_y(\theta_i) \otimes \outer{i}{i}$, and $Q= \outer{0}{0} \otimes Q_{0} + \outer{1}{1} \otimes Q_{1}$, 
such that
\[ 
\Qcircuit @C=0.5em @R=0.4em {
  & \qw & \multigate{2}{U} & \qw & &&  & & \qw & \gate{} \qwx[2]& \gate{R}      &\gate{} \qwx[2]&\qw\\
 &  &                &     & &=& &   &    &          &                 &               &   \\
 & \qw / & \ghost{U}        & \qw &  && & & \qw / & \gate{Q}       & \gate{} \qwx[-2]&\gate{P}       &\qw\\
}
\]
\label{cosinesine}
\end{lemma}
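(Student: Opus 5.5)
The plan is to reduce the statement to the matrix Cosine-Sine decomposition (CSD) and then translate each block factor into the multiplexed-gate notation used in the circuit. Write $U$ as a $2\times 2$ block matrix with blocks of size $2^{n-1}\times 2^{n-1}$, where the block index is the first (top) qubit. The standard CSD of a unitary matrix partitioned into equal halves gives unitaries $P_0,P_1,Q_0,Q_1\in U(2^{n-1})$ and real angles $\phi_0,\dots,\phi_{2^{n-1}-1}$ such that
\[
U \;=\; \begin{pmatrix} P_0 & 0\\ 0 & P_1\end{pmatrix}
\begin{pmatrix} C & -S_\phi\\ S_\phi & C\end{pmatrix}
\begin{pmatrix} Q_0 & 0\\ 0 & Q_1\end{pmatrix},
\]
where $C=\mathrm{diag}(\cos\phi_j)$ and $S_\phi=\mathrm{diag}(\sin\phi_j)$. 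I will cite this from the literature, e.g.\ Paige--Wei or the form used by \cite{shende2005synthesis}, rather than reprove it. If a self-contained argument is wanted, I would derive it from the SVD of the top-left block. Take $U_{00}=P_0 C Q_0$ with singular values in $[0,1]$, since $U_{00}$ is a block of a unitary. Then obtain $U_{10}$, $U_{01}$ and $U_{11}$ by polar decomposition and the unitarity relations $U_{00}^\dagger U_{00}+U_{10}^\dagger U_{10}=I$, etc.

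Next I identify the factors. The outer factors are $\outer{0}{0}\otimes P_0+\outer{1}{1}\otimes P_1=P$, and likewise $Q$. These are exactly the gates in the circuit that are controlled by the top qubit acting on the bundle. For the middle factor, I reindex: the entries of the block matrix couple only the basis states $\ket{0}\ket{j}$ and $\ket{1}\ket{j}$. On that pair the factor acts as the $2\times2$ matrix $\begin{pmatrix}\cos\phi_j & -\sin\phi_j\\ \sin\phi_j&\cos\phi_j\end{pmatrix}=R_y(2\phi_j)$. Hence the middle factor equals $\sum_j R_y(\theta_j)\otimes\outer{j}{j}=R$ with $\theta_j=2\phi_j$. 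This is a multiplexed $R_y$ on the top qubit, controlled by the bundle, which matches the diagram and the form used in the proof of Lemma~\ref{lem:clifford-plus-rx-is-universal}.

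Finally, the circuit is read right to left as matrix product, so the diagram's order $Q$, $R$, $P$ corresponds to $U=PRQ$, and the proof is complete.

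The main obstacle is the CSD itself in the non-generic case. When $U_{00}$ has singular values equal to $0$ or $1$, the blocks $U_{10}$ and $U_{01}$ are rank-deficient, and choosing the unitaries $P_1$ and $Q_1$ requires completing partial isometries carefully. The sign conventions also need care so that the middle block really has the form $R_y$, not a reflection. I would handle the sign issue by absorbing any $-1$ diagonal entries into $P_1$, which keeps $P_1$ unitary and turns each $2\times2$ block into a rotation.
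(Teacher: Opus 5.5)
Your proposal is correct and follows the same approach as the paper. Both cite the Paige--Wei Cosine--Sine decomposition for the equal $2^{n-1}\times 2^{n-1}$ block partition, identify the block-diagonal factors with $P$ and $Q$, and identify the middle factor with the multiplexed rotation $R$. Your extra details are correct refinements of points the paper leaves implicit: the angle relation $\theta_j=2\phi_j$, the right-to-left reading $U=PRQ$, and absorbing a sign into $P_1$ when a reflection convention is used.
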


\begin{proof}
This is the Cosine--Sine decomposition of Paige and
Wei~\cite{PaigeWei94}, applied to the $2^{n-1}\times 2^{n-1}$
block partition of $U$.  In this case, the decomposition has the form
\[
U
=
\begin{pmatrix}
P_0 & 0\\
0   & P_1
\end{pmatrix}
\begin{pmatrix}
C & -S\\
S & C
\end{pmatrix}
\begin{pmatrix}
Q_0 & 0\\
0   & Q_1
\end{pmatrix},
\]
where $C$ and $S$ are diagonal real matrices satisfying
$C^2+S^2=I^{\otimes (n-1)}$.  Hence there exist angles
$\theta_0,\ldots,\theta_{2^{n-1}-1}$ such that the middle matrix is
the multiplexed rotation
\[
R=\sum_{i=0}^{2^{n-1}-1}
R_y(\theta_i)\otimes\outer{i}{i}.
\]
The block-diagonal matrices are precisely the gates $P$ and $Q$
defined above.  This quantum-circuit formulation is also given
by~\citeN{ShendeBullockMarkov06}.
\end{proof}

\begin{lemma}[Quantum multiplexor decomposition] \cite[Theorem 12]{ShendeBullockMarkov06}.
For two $(n-1)$-qubit gates $U_{0}$ and $U_{1}$ and one n-qubit gate $U = \outer{0}{0} \otimes U_{0} + \outer{1}{1} \otimes U_{1}$, there exist $2^{n-1}$ 1-qubit gates $R_z(\alpha_0)$, $R_z(\alpha_1)$, $\cdots$, and $R_z(\alpha_{2^{n-1}-1})$, two $(n-1)$-qubit gates $P$ and $Q$, and one n-qubit gate $R = \sum^{2^{n-1}-1}_{i=0}R_{z}(\alpha_i) \otimes \outer{i}{i}$, such that:
\[ 
\Qcircuit @C=0.5em @R=0.4em {
& \qw & \gate{} \qwx[2] & \qw &  && &  & \qw & \qw       & \gate{R}      &\qw            &\qw\\  
& & &     & & = & &  &             &        &         &               &   \\
& \qw / & \gate{U}       & \qw &  && &  & \qw / & \gate{Q}       & \gate{} \qwx[-2]&\gate{P}       &\qw\\
}
\]
\label{demultiplexing}
\end{lemma}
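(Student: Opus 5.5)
The statement is the quantum multiplexor decomposition of \citeN{ShendeBullockMarkov06}. The plan is to reduce it to a diagonalization problem. We look for $(n-1)$-qubit unitaries $P,Q$ and a diagonal unitary $D$ such that
\[
U_0 = P D Q, \qquad U_1 = P D^\dagger Q .
\]
If we can find them, then
\[
U = \outer{0}{0}\otimes U_0 + \outer{1}{1}\otimes U_1 = (I\otimes P)\,\bigl(\outer{0}{0}\otimes D + \outer{1}{1}\otimes D^\dagger\bigr)\,(I\otimes Q).
\]
Write $D=\sum_i e^{-i\alpha_i/2}\outer{i}{i}$. Then the middle factor equals $\sum_i R_z(\alpha_i)\otimes\outer{i}{i}$, with $R_z(\alpha_i)$ acting on the first qubit. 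This is exactly the gate $R$ in the statement, so $U = (I\otimes P)\,R\,(I\otimes Q)$, which is the claimed circuit read right to left.

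To construct $P$, $D$ and $Q$, we first eliminate $Q$. From the two desired equations, $U_0U_1^\dagger = P D^2 P^\dagger$. Now $U_0U_1^\dagger$ is unitary and hence normal, so by the spectral theorem there is a unitary $P$ and a diagonal unitary $\Lambda$ with $U_0U_1^\dagger = P\Lambda P^\dagger$. We choose a square root $D$ of $\Lambda$ entrywise. Each diagonal entry of $\Lambda$ has the form $e^{-i\alpha_i}$ for some real $\alpha_i$, and we set $D=\mathrm{diag}(e^{-i\alpha_i/2})$, so that $D^2=\Lambda$. Then we define $Q = D P^\dagger U_1$, which is a product of unitaries and hence unitary. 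By construction $U_1 = P D^\dagger Q$. It remains to check $U_0 = PDQ$:
\[
PDQ = P D^2 P^\dagger U_1 = P\Lambda P^\dagger U_1 = U_0U_1^\dagger U_1 = U_0 .
\]

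The main obstacle is bookkeeping rather than mathematics. The order of tensor factors must match the paper's convention that the target qubit is the top wire and that $R$ is written as $R_z(\alpha_i)\otimes\outer{i}{i}$. The first (top) qubit then carries the $\outer{0}{0}$, $\outer{1}{1}$ selector in $U$ but carries the rotation in $R$. These agree because of the identity
\[
\outer{0}{0}\otimes D+\outer{1}{1}\otimes D^\dagger = \sum_i \bigl(e^{-i\alpha_i/2}\outer{0}{0}+e^{i\alpha_i/2}\outer{1}{1}\bigr)\otimes\outer{i}{i},
\]
which holds by expanding $D=\sum_i e^{-i\alpha_i/2}\outer{i}{i}$. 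This identity is the one point I would verify carefully. The spectral-theorem step is standard.
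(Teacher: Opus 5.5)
Your proof is correct and is essentially the standard argument behind Theorem~12 of \cite{ShendeBullockMarkov06}, which the paper cites in place of giving its own proof: impose $U_0=PDQ$ and $U_1=PD^\dagger Q$, diagonalize the unitary $U_0U_1^\dagger=PD^2P^\dagger$ by the spectral theorem, and set $Q=DP^\dagger U_1$. Your identity $\outer{0}{0}\otimes D+\outer{1}{1}\otimes D^\dagger=\sum_i R_z(\alpha_i)\otimes\outer{i}{i}$, with $D=\sum_i e^{-i\alpha_i/2}\outer{i}{i}$, correctly matches the paper's wire ordering, so nothing is missing.
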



\begin{lemma} [Quantum multiplexor decomposition] 
\cite[Theorem~8]{ShendeBullockMarkov06}.
For $n \geq 1$,
let $R$ be the $(n+1)$-qubit gate 
$\sum_{i=0}^{2^n-1} R_z(\alpha_i) \otimes \outer{i}{i}$.
We can pick angles $\beta_i$ and $\gamma_i$ (for $0 \leq i < 2^{n-1}$) and
define 
$R_0 = \sum^{2^{n-1}-1}_{i=0}R_{z}(\beta_i) \otimes \outer{i}{i}$ and 
$R_1 = \sum^{2^{n-1}-1}_{i=0}R_{z}(\gamma_i) \otimes \outer{i}{i}$ 
such that:
\[ 
\Qcircuit @C=0.5em @R=0.4em {
  & \qw & \gate{R} \qwx[1] & \qw & &&  & & \qw & \targ & \gate{R_0}      & \targ & \gate{R_1} & \qw\\
 &  \qw &   \gate{} \qwx[1]     &  \qw   & &=& &   &   \qw &     \ctrl{-1}     &  \qw               &        \ctrl{-1}       & \qw  & \qw \\
 & \qw / & \gate{}        & \qw &  && & & \qw / & \qw       & \gate{} \qwx[-2] & \qw  &\gate{} \qwx[-2]    &\qw\\
}
\]
\label{rzrz}
\end{lemma}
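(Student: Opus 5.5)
The statement is Lemma~\ref{rzrz}: a multiplexed $R_z$ on the top qubit, controlled by $n$ qubits, equals $\C{X}\cdot R_0\cdot \C{X}\cdot R_1$. Here the $\C{X}$ gates are controlled by the second qubit and target the top qubit, and $R_0,R_1$ are multiplexed $R_z$ gates controlled only by the remaining $n-1$ qubits. I would prove it by direct computation, splitting on the value of the second qubit.

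\textbf{Step 1: the case of the second qubit $\ket{0}$.} Index the control basis states as $i = 2^{n-1}b + j$, where $b\in\{0,1\}$ is the value of the second qubit and $j$ indexes the remaining $n-1$ qubits. Every gate in the circuit is block diagonal with respect to the control register, so it suffices to compare, for each fixed $(b,j)$, the $2\times 2$ operator applied to the top qubit. When $b=0$ the $\C{X}$ gates act trivially, so the right-hand side applies $R_z(\gamma_j)R_z(\beta_j)=R_z(\beta_j+\gamma_j)$. Here I use that $R_z$ rotations commute and add angles.

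\textbf{Step 2: the case of the second qubit $\ket{1}$.} When $b=1$ the circuit applies $X R_z(\beta_j) X R_z(\gamma_j)$, in the matrix order dictated by the diagram. The key identity is $X R_z(\beta) X = R_z(-\beta)$, which is checked directly from the diagonal form of $R_z$ (or from $XZX=-Z$). So the product is $R_z(\gamma_j-\beta_j)$. Because all the factors are $R_z$ rotations, the order of multiplication in the time-ordered product does not matter, which sidesteps any left/right bookkeeping.

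\textbf{Step 3: solving for the angles.} We need $\beta_j+\gamma_j=\alpha_j$ and $\gamma_j-\beta_j=\alpha_{2^{n-1}+j}$. These are solved by
\[
\gamma_j=\tfrac{1}{2}\bigl(\alpha_j+\alpha_{2^{n-1}+j}\bigr),\qquad \beta_j=\tfrac{1}{2}\bigl(\alpha_j-\alpha_{2^{n-1}+j}\bigr).
\]
With these angles the blocks agree for every $(b,j)$, so the operators are equal exactly, not merely up to global phase.

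\textbf{Main obstacle.} The only delicate point is the indexing convention. The paper writes $R_z(\alpha_i)\otimes\outer{i}{i}$ with the rotated qubit first, and I must make sure the second wire corresponds to the most significant bit of $i$. If instead it corresponds to the least significant bit, I would simply relabel by writing $i=2j+b$; the computation is otherwise unchanged. A minor point is the case $n=1$: then the remaining register is empty, $R_0$ and $R_1$ are plain $R_z$ gates, and the same argument goes through.
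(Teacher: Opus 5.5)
Your proof is correct. The blockwise reduction works, and the identity $X R_z(\beta) X = R_z(-\beta)$ together with the commutativity of $R_z$ rotations makes the left/right ordering irrelevant; the angles $\gamma_j = \tfrac{1}{2}(\alpha_j+\alpha_{2^{n-1}+j})$ and $\beta_j = \tfrac{1}{2}(\alpha_j-\alpha_{2^{n-1}+j})$ check out under the most-significant-bit convention, including the case $n=1$. The paper gives no argument of its own here beyond citing Theorem~8 of Shende, Bullock, and Markov, and your direct computation is the standard argument for that result, so it supplies the deferred proof along the same lines.
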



\bibliographystyle{ACM-Reference-Format}
\bibliography{references}

\end{document}